\documentclass[11pt,letterpaper]{article}
\usepackage{neutralpreprint}
\usepackage[utf8]{inputenc}
\usepackage[T1]{fontenc}
\usepackage{hyperref,url,booktabs}
\usepackage{amsmath,amssymb,amsthm,mathtools,bm}
\usepackage{graphicx,xcolor,microtype}
\usepackage{algorithm,algorithmic,placeins}
\hypersetup{hypertexnames=false,
  pdftitle={Adaptive mixture variational inference for spike-and-slab regression}, pdfauthor={Hanqing Li, Yaroslav Golub, and Xuewen Lu}, pdfsubject={Variational inference and uncertainty quantification for sparse Bayesian regression}, pdfkeywords={Variational inference, spike-and-slab priors, Bayesian variable selection, mixture approximations, high-dimensional regression, Bernstein-von Mises theorem}}
\newcommand{\R}{\mathbb R}
\newcommand{\E}{\mathbb E}
\newcommand{\KL}{D_{\mathrm{KL}}}
\newcommand{\TV}{d_{\mathrm{TV}}}

\newcommand{\one}{\mathbf 1}

\theoremstyle{preprintplain}
\newtheorem{theorem}{Theorem}
\newtheorem{proposition}{Proposition}
\newtheorem{corollary}{Corollary}

\theoremstyle{preprintdefinition}

\newtheorem{assumption}{Assumption}

\title{Adaptive mixture variational inference for\\spike-and-slab regression}
\author{\href{https://orcid.org/0009-0001-0493-237X}{Hanqing Li}\textsuperscript{1}\qquad
  \href{https://orcid.org/0009-0007-6398-4421}{Yaroslav Golub}\textsuperscript{2}\qquad
  \href{https://orcid.org/0000-0002-0905-2697}{Xuewen Lu}\textsuperscript{1}\\[4pt]
  {\small\textsuperscript{1}Department of Mathematics and Statistics, University of Calgary}\\[2pt] {\small\textsuperscript{2}Department of Electrical and Software Engineering, University of Calgary}\\[2pt] {\small\href{mailto:hanqing.li@ucalgary.ca}{hanqing.li@ucalgary.ca}\qquad
  \href{mailto:yaroslav.golub@ucalgary.ca}{yaroslav.golub@ucalgary.ca}\qquad
  \href{mailto:xlu@ucalgary.ca}{xlu@ucalgary.ca}}}

\date{}
\begin{document}
\maketitle

\begin{abstract}
Correlated predictors can support competing sparse explanations with similar predictions, making joint uncertainty about variable inclusion difficult to capture with mean-field approximations. We develop an adaptive fitting procedure for mixtures of product distributions in Gaussian regression with a point-mass spike-and-slab prior. It minimizes reverse Kullback--Leibler divergence directly on inclusion indicators and active coefficients, jointly refining component parameters and weights as the mixture grows. This avoids an additional divergence penalty on unused latent coefficients under independent augmentation. Our analysis relates approximation accuracy to mixture size, support coverage and dependence within supports, and establishes contraction, selection consistency and a Bernstein--von Mises approximation under explicit conditions on the prior, posterior concentration and variational error. On all 250 simulated datasets with exact posterior references, mixtures reduce errors in inclusion probabilities, grouped support probabilities and coefficient covariance relative to multistart mean field. Comparisons at fixed mixture size and common initialization favor direct joint refinement over augmented or restricted refinement in posterior divergence. Complete stagewise fitting can nevertheless be more accurate near collinearity. The results support direct joint refinement for posterior approximation while showing that local gains do not ensure superiority of the full adaptive search.

\end{abstract}

\begingroup\small
\noindent\textit{Keywords.} Variational inference, Spike-and-slab priors, Bayesian variable selection, Mixture approximations, High-dimensional regression, Bernstein--von Mises theorem.\par
\noindent\textit{MSC 2020.} Primary 62F15; secondary 62J05, 62F12.\par
\endgroup

\section{Introduction}
In sparse regression, accurate prediction need not resolve which predictors explain the response. Two highly correlated predictors may serve as substitutes: the posterior can favor including either one while assigning little probability to including both. A useful approximation should preserve this uncertainty about model choice alongside uncertainty in the coefficients. We study Gaussian regression with a point-mass spike-and-slab prior, which expresses both forms of uncertainty \citep{mitchell1988bayesian,george1993variable}.

Mean-field variational inference (MFVI) minimizes reverse Kullback--Leibler (KL) divergence over coordinate products \citep{blei2017variational}, enabling efficient spike-and-slab fitting and sparse estimation guarantees \citep{carbonetto2012scalable,spence2020flexible,ray2022vb}. Under independence, preserving the marginal inclusion probabilities of two competing predictors also assigns probability to selecting both or neither. Thus accurate estimation can coexist with distorted joint uncertainty \citep{giordano2018covariances,margossian2025uncertainty}. Other approaches represent dependence through single-effect decompositions or entropic regularization \citep{wang2020susie,wu2026xi}.

Mixtures offer a natural way to represent competing explanations: components remain simple, while their average captures dependence \citep{bishop1997mixtures,miller2017boosting,locatello2018boosting}. Gaussian-mixture and variable-selection methods already exploit this flexibility \citep{arenz2020trust,arenz2023unified,rockova2016determinantal,henclova2026gemss}. In particular, the augmented boosting baseline of \citet{spence2020flexible} induces our exact-zero mixture family when all parameters and weights vary freely. Our focus is therefore how the objective and component updates affect approximation within this family.

Two considerations guide our approach. An augmented approximation that makes inclusion indicators and latent coefficients independent can incur a divergence cost for excluded predictors, whose latent coefficients do not affect the likelihood. An early component may cover several competing explanations; adding another can allow the earlier component to specialize, provided its parameters and weight remain adjustable. We therefore extend the direct treatment of inclusion indicators and active coefficients in Spence's product method \citep{spence2020flexible} to mixtures, jointly refining all component parameters and weights as the mixture grows.

We develop an adaptive algorithm implementing these principles, with separate numerical validation of proposed improvements. Our analysis relates attainable accuracy to mixture size, support coverage and within-support dependence, and gives variational-error conditions for contraction, selection consistency and a Gaussian limit under explicit model and posterior assumptions. On exact-reference datasets, mixtures reduce errors in support probabilities and coefficient covariance relative to mean field. Controlled comparisons from a common initialization favor direct joint refinement in reverse KL, while complete stagewise fitting can perform better near collinearity. These findings distinguish local refinement gains from the performance of the full search procedure.

\section{Model and mixtures}
\label{sec:formulation}

\subsection{Spike-and-slab regression}
\label{sec:model}
Let $y\in\R^n$ be the observed response and $X\in\R^{n\times p}$ the fixed design matrix, with column $X_j$. We use Gaussian regression with an independent point-mass spike-and-slab prior,
\begin{equation}
 y\mid\beta\sim N_n(X\beta,\sigma^2I_n),\quad \beta_j\mid\gamma_j\sim(1-\gamma_j)\delta_0+\gamma_jN(0,\tau^2),\quad \gamma_j\sim\operatorname{Bernoulli}(\omega).
 \label{eq:model}
\end{equation}
Here $j=1,\ldots,p$, $I_n$ is the identity matrix, and $\delta_0$ is unit mass at zero. The noise variance $\sigma^2>0$, slab variance $\tau^2>0$, and inclusion probability $0<\omega<1$ are specified for each dataset. Define the indicator--coefficient pair $z_j=(\gamma_j,\beta_j)$, and $z=(z_1,\ldots,z_p)$. The state space of coordinate $z_j$ is $\mathcal S_j=\{(0,0)\}\sqcup(\{1\}\times\R)$, where $\sqcup$ denotes disjoint union. To treat the atom and continuous slab together, let $\nu_j$ assign unit mass to $(0,0)$ and Lebesgue measure to the active copy of $\R$.

Write $\varphi(b;m,v)$ for the $N(m,v)$ density. The prior $P_j$ has $\nu_j$-density $1-\omega$ when inactive and $\omega\varphi(\beta_j;0,\tau^2)$ when active. With $P=\bigotimes_jP_j$ and $\nu=\bigotimes_j\nu_j$, the posterior is
\begin{equation}
 \frac{\mathrm{d}\Pi}{\mathrm{d}P}(z)=\frac{e^{-\ell(z)}}{\mathcal Z},\qquad
 \ell(z)=\frac{\|y-X\beta\|^2}{2\sigma^2},\qquad
 \mathcal Z=\int e^{-\ell(u)}P(\mathrm{d}u).
 \label{eq:posterior}
\end{equation}
For probability measures $Q$ and $R$, define $\KL(Q\|R)=\int\log(\mathrm{d}Q/\mathrm{d}R)\,\mathrm{d}Q$ when $Q\ll R$, and infinity otherwise. Total variation (TV) distance is $\TV(Q,R)=\sup_A|Q(A)-R(A)|$. Norms without a subscript are Euclidean. The variational objective is
\begin{equation}
 \mathcal L(Q)=\E_Q\ell+\KL(Q\|P)=\KL(Q\|\Pi)-\log\mathcal Z.
 \label{eq:objective}
\end{equation}
Thus minimizing $\mathcal L$ approximates the posterior without evaluating $\mathcal Z$. For two candidate distributions, an objective difference is exactly their difference in reverse KL. Throughout, each indicator and its coefficient remain one joint coordinate.

\subsection{Mixture family}
\label{sec:mixture-family}
We use uppercase $Q$ for variational probability measures and lowercase $q$ for their densities. For component $k=1,\ldots,K$ and coordinate $j$, let $Q_{kj}$ have inclusion probability $\alpha_{kj}\in[0,1]$, active mean $\mu_{kj}\in\R$, and active variance $v_{kj}>0$. Its coordinate density $q_{kj}=\mathrm{d}Q_{kj}/\mathrm{d}\nu_j$ is
\begin{equation*}
 q_{kj}(z_j)=\begin{cases}1-\alpha_{kj},&z_j=(0,0),\\ \alpha_{kj}\varphi(\beta_j;\mu_{kj},v_{kj}),&z_j=(1,\beta_j).\end{cases}
\end{equation*}
Define the component law $Q_k=\bigotimes_jQ_{kj}$, whose joint $\nu$-density is $q_k(z)=\prod_{j=1}^p q_{kj}(z_j)$, and let
\begin{equation}
 \mathcal Q_K=\left\{Q_\theta=\sum_{k=1}^K w_kQ_k:\ w_k\geq0,\ \sum_{k=1}^K w_k=1\right\}.
 \label{eq:mixture-family}
\end{equation}
The mixture law $Q_\theta$ has $\nu$-density $q_\theta(z)=\sum_{k=1}^K w_kq_k(z)$. The parameter $\theta=(w,\alpha,\mu,v)$ collects the weights and component parameters. A single component is Bernoulli--Gaussian mean field. Allowing zero weights or duplicate components makes the families nested in $K$. Boundary inclusion probabilities are also allowed. Each component can assign positive mass to many regression supports.

Let $C$ be a component label with $\Pr(C=k)=w_k$, and let $Z\mid C=k\sim Q_k$, with realization $z$. The pairs $Z_j$ are independent conditional on $C$ but can be dependent after averaging over $C$. For example, two equally weighted components with inclusion vectors $(1,0)$ and $(0,1)$ represent a law assigning probability $1/2$ to each of these two supports. Both marginal inclusion probabilities are $1/2$. A product law with those marginals instead assigns probability $1/4$ to every support, including neither or both variables.

Write $m_k=(m_{k1},\ldots,m_{kp})^\top$ and $d_k=(d_{k1},\ldots,d_{kp})^\top$ for the component mean and variance vectors, where $m_{kj}=\alpha_{kj}\mu_{kj}$ and $d_{kj}=\alpha_{kj}(v_{kj}+\mu_{kj}^2)-m_{kj}^2$, and let $\bar m=\sum_kw_km_k$. The law of total covariance yields
\begin{equation}
 \operatorname{Cov}_{Q_\theta}(\beta)=\sum_kw_k\operatorname{diag}(d_k)+\sum_kw_k(m_k-\bar m)(m_k-\bar m)^\top.
 \label{eq:mixture-covariance}
\end{equation}
The second term is the between-component covariance and has rank at most $K-1$. Section~\ref{sec:asymptotic} relates this structure to approximation accuracy. The posterior inclusion probabilities (PIPs) are $Q_\theta(\gamma_j=1)=\sum_kw_k\alpha_{kj}$, the posterior means are $\E_{Q_\theta}\beta_j=\sum_kw_k\alpha_{kj}\mu_{kj}$, and joint inclusions satisfy $Q_\theta(\gamma_i=\gamma_j=1)=\sum_kw_k\alpha_{ki}\alpha_{kj}$ for $i\ne j$. Marginal coefficient distributions combine a zero atom and a Gaussian mixture. Their distribution functions are explicit, and quantiles follow by numerical inversion with the jump at zero handled separately. More general joint events can be assessed by direct draws from the fitted mixture.

\subsection{Direct and augmented objectives}
\label{sec:direct-augmentation}
The augmented boosting baseline in \citet{spence2020flexible} uses mixtures of independent Bernoulli and Gaussian variables. With freely varying weights and component parameters, it induces $\mathcal Q_K$. This baseline differs from that paper's principal product method, which already avoids auxiliary independence. We compare its augmented objective with our direct objective on the same induced family.

Write $\beta=\gamma\odot\beta^+$, where $\odot$ denotes coordinatewise multiplication, with independent priors $\beta_j^+\sim N(0,\tau^2)$ and the indicator prior and likelihood in \eqref{eq:model}. Let $\Pi^+$ and $Q^+$ be the augmented posterior and variational law, and $Q$ the induced law of $z=(\gamma,\beta)$. Under $\Pi^+$, conditioning on $z$ leaves $\beta_{S^c}^+$ with prior law $\mathcal N_{S^c}=\bigotimes_{j\notin S}N(0,\tau^2)$, where $S=\{j:\gamma_j=1\}$. For finite divergences, the relative-entropy chain rule gives
\[
 \KL(Q^+\|\Pi^+)=\KL(Q\|\Pi)
 +\E_Q\KL\{Q^+(\beta_{S^c}^+\mid z)\|\mathcal N_{S^c}\}.
\]
The extra term is nonnegative and vanishes exactly when the conditional discarded-variable law equals $\mathcal N_{S^c}$, $Q$-almost surely. For a product law with independent $\gamma_j\sim\operatorname{Bernoulli}(\alpha_j)$ and $\beta_j^+\sim N(\mu_j,v_j)$, it is
\[
 \KL(Q^+\|\Pi^+)-\KL(Q\|\Pi)
 =\sum_j(1-\alpha_j)\KL\{N(\mu_j,v_j)\|N(0,\tau^2)\}.
\]
Thus auxiliary independence penalizes departures of the latent Gaussian from its prior even when a predictor is inactive. Direct optimization removes this cost while retaining the induced Bernoulli--Gaussian mixture family. For a mixture, the discarded-variable conditional law is itself a mixture, so the extra cost is not generally the weighted sum of component costs. Appendix~\ref{app:augmentation} gives the derivation and a completion of any direct law with zero conditional cost. The identity motivates direct fitting but does not establish numerical superiority. Section~\ref{sec:numerical} compares direct and augmented objectives from a common candidate at fixed $K$, alongside comparisons of complete fitting procedures.

\section{Computation}
\label{sec:computation}
Starting from mean field, the procedure adds components and jointly refines their parameters and weights. Separate integration draws assess proposed improvements. Appendix~\ref{app:reference-adaptive} specifies the numerical settings used in Algorithm~\ref{alg:adaptive-mixture}.

\subsection{Objective evaluation}
\label{sec:mixture-objective}
Let $\mathcal L_k=\E_{Q_k}\ell+\KL(Q_k\|P)$, and recall the component and mixture densities $q_k$ and $q_\theta$. Define the mutual information between the component label and the joint vector $Z$ by
\begin{equation}
 \mathcal J(\theta):=\sum_{k=1}^{K}w_k\E_{Q_k}\log\frac{q_k(Z;\theta)}{q_\theta(Z)}.
 \label{eq:mixture-joint-information}
\end{equation}
Zero-weight summands are omitted, and density-weighted logarithmic terms are zero wherever their density factor vanishes. The standard mixture entropy identity evaluates \eqref{eq:objective} as follows.
\begin{proposition}[Objective identity]
\label{prop:mixture-information}
For finite component objectives, with $H(w)=-\sum_kw_k\log w_k$ and the convention $0\log0=0$, we have
\begin{equation*}
 \mathcal L(Q_\theta)=\sum_kw_k\mathcal L_k-\mathcal J(\theta),\qquad 0\leq \mathcal J(\theta)\leq H(w)\leq\log K.
\end{equation*}
\end{proposition}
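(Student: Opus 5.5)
We expand $\mathcal L(Q_\theta)=\E_{Q_\theta}\ell+\KL(Q_\theta\|P)$ term by term with respect to the base measure $\nu$, and then bound $\mathcal J$ by comparing it with the entropy of the component label. Let $p=\mathrm{d}P/\mathrm{d}\nu$, which is strictly positive on every $\mathcal S_j$.

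\textbf{Step 1: the identity.} Linearity of expectation gives $\E_{Q_\theta}\ell=\sum_kw_k\E_{Q_k}\ell$. For the divergence term, $Q_\theta\ll Q_k$ need not hold, so we work with densities:
\[
\KL(Q_\theta\|P)=\int q_\theta\log\frac{q_\theta}{p}\,\mathrm{d}\nu=\sum_kw_k\int q_k\log\frac{q_\theta}{p}\,\mathrm{d}\nu .
\]
On the set where $q_k>0$, write $\log(q_\theta/p)=\log(q_k/p)-\log(q_k/q_\theta)$. Wherever $q_k>0$ and $w_k>0$ we have $q_\theta\ge w_kq_k>0$, so this split is legitimate. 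It yields
\[
\int q_k\log\frac{q_\theta}{p}\,\mathrm{d}\nu=\KL(Q_k\|P)-\E_{Q_k}\log\frac{q_k}{q_\theta}.
\]
Summing against $w_k$ produces $\sum_kw_k\mathcal L_k-\mathcal J(\theta)$.

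\textbf{Step 2: $0\le\mathcal J\le H(w)$.} The two bounds come from two different pointwise facts.
\begin{itemize}
\item \emph{Lower bound.} $\mathcal J=\sum_kw_k\KL(Q_k\|Q_\theta)$, and each summand is nonnegative. Here $Q_k\ll Q_\theta$ holds because $w_k>0$.
\item \emph{Upper bound.} On $\{q_k>0\}$ the pointwise inequality $q_\theta\ge w_kq_k$ gives $\log(q_k/q_\theta)\le-\log w_k$. Integrating, $\mathcal J\le-\sum_kw_k\log w_k=H(w)$.
\end{itemize}
Finally, $H(w)\le\log K$ is the standard bound via Jensen or Gibbs' inequality applied to the uniform distribution on the support of $w$. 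Zero-weight components are dropped throughout, matching the stated convention.

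\textbf{Main obstacle: integrability.} The delicate point is justifying the splitting of integrals in Step 1, i.e.\ avoiding $\infty-\infty$. The assumption of finite component objectives gives $\E_{Q_k}\ell<\infty$ and $\KL(Q_k\|P)<\infty$. Since $0\le\log(q_k/q_\theta)\le-\log w_k$ fails pointwise only in the lower bound, we need one more observation. The quantity $\log(q_k/q_\theta)$ is bounded above by $-\log w_k$, and its negative part is integrable under $Q_k$ because $\int q_k(q_\theta/q_k)\,\mathrm{d}\nu\le1$. Hence each term $\E_{Q_k}\log(q_k/q_\theta)$ is finite.

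Consequently $\int q_k\log(q_\theta/p)\,\mathrm{d}\nu$ is a difference of two finite quantities, and the rearrangement is valid. As a by-product, $\mathcal L(Q_\theta)$ is finite.
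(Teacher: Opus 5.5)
Your proof is correct and follows the paper's route. The paper obtains the identity by citing the relative-entropy chain rule $\sum_kw_k\KL(Q_k\|P)=\KL(Q_\theta\|P)+\sum_kw_k\KL(Q_k\|Q_\theta)$, which you derive directly via the split $\log(q_\theta/p)=\log(q_k/p)-\log(q_k/q_\theta)$, and both arguments get the bounds from nonnegativity of $\KL(Q_k\|Q_\theta)$ and the pointwise inequality $q_\theta\geq w_kq_k$. Your integrability check, using $\int_{\{q_k>0\}}q_\theta\,\mathrm{d}\nu\leq1$ to control the negative part, makes explicit the $\infty-\infty$ issue that the paper leaves implicit.
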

Each $\mathcal L_k$ is analytic, and its residual calculation uses $X$ without a dense $p\times p$ Gram matrix (Appendix~\ref{app:mixture}). At $K=1$, mutual information vanishes. For larger $K$, only the overlap term requires numerical integration. We use scrambled quasi-Monte Carlo points \citep{owen1995randomly,liu2021quasi}, with independent scrambles assessing integration variability, including that from the Bernoulli transformation.

\subsection{Joint refinement}
\label{sec:mixture-algorithm}
Joint refinement holds the component count $K$ fixed while updating all weights, inclusion probabilities, active means and variances together, including those of existing components. Thus all coordinates of $\theta=(w,\alpha,\mu,v)$ may change. The local counter $t$ starts at zero from a proposed initialization $\theta^{(0)}$ and is distinct from the expansion counter $k$ in Algorithm~\ref{alg:adaptive-mixture}.

\begin{algorithm}[tbp]
\caption{Adaptive mixture variational inference for spike-and-slab regression}
\label{alg:adaptive-mixture}
\begin{algorithmic}[1]
\REQUIRE $(X,y,\sigma^2,\omega,\tau^2)$, component cap $K_{\max}$, search budgets, seed.
\ENSURE Fitted mixture $Q$ with $K$ components, stopping record and independent assessment.
\STATE $Q^{(1)}\gets Q_{\mathrm{MF}}$, the best of ten MFVI starts; $k\gets1$.
\WHILE{$k<K_{\max}$ and the search budget permits another stage}
 \STATE $\mathrm{accepted}\gets\mathrm{false}$.
 \FOR{\textit{proposal} in (small split, large split, residual product), in order}
  \STATE \textbf{if} the search budget is exhausted \textbf{then break}
  \STATE $\widetilde Q^{(k+1)}\gets\text{\textsc{Propose}}(Q^{(k)},\textit{proposal})$.
  \STATE $\widetilde Q^{(k+1)}\gets\text{\textsc{JointRefine}}(\widetilde Q^{(k+1)})$.
  \IF{$\text{\textsc{Validate}}(\widetilde Q^{(k+1)},Q^{(k)};\mathrm{expansion})$}
   \STATE $Q^{(k+1)}\gets\widetilde Q^{(k+1)}$; $k\gets k+1$; $\mathrm{accepted}\gets\mathrm{true}$.
   \STATE \textbf{break}
  \ENDIF
 \ENDFOR
 \STATE \textbf{if} $\mathrm{accepted}=\mathrm{false}$ \textbf{then break}
\ENDWHILE
\STATE Record whether search stopped at the component cap, budget or failed proposals.
\STATE $K\gets k$.
\IF{$k>1$ and $\text{\textsc{Validate}}(Q^{(k)},Q^{(1)};\mathrm{fallback})=\mathrm{false}$}
 \STATE $K\gets1$; record fallback.
\ENDIF
\STATE $Q\gets Q^{(K)}$.
\STATE Independently assess $Q$ using a separate integration batch.
\RETURN $Q$, stopping record and assessment.
\end{algorithmic}
\end{algorithm}

\paragraph{Fixed reference and numerical objective.}
Write the reference mixture as $R=\sum_{h=1}^{K}\widetilde w_hR_h$. If $r_h$ denotes the density of $R_h$, its density is $r(z)=\sum_{h=1}^{K}\widetilde w_hr_h(z)$. At local iteration $t$, choose $R^{(t)}=Q_{\theta^{(t)}}$, with weights $\widetilde w_h=w_h^{(t)}$ and component distributions $R_h=Q_h(\theta^{(t)})$. For the calculations below, write $R=R^{(t)}$. Generate $N$ scrambled quasi-Monte Carlo points $z_{h1},\ldots,z_{hN}$ from each $R_h$. Keep this reference and these points fixed while optimizing the trial parameters $\theta$. The saved weights $\widetilde w_h$ remain fixed while the trial weights $w_k$ may change.

The change of measure requires $Q_\theta\ll R$. This holds during numerical fitting because positive weights and variances and $0<\alpha_{kj}<1$ give $r(z)>0$ throughout the mixed state space. For the trial density $q_\theta(z)=\sum_{k=1}^{K}w_kq_k(z;\theta)$, importance sampling gives
\[
 \mathcal J(\theta)=\E_R\!\left[\sum_{k=1}^{K}
       \frac{w_kq_k(Z;\theta)}{r(Z)}
       \log\frac{q_k(Z;\theta)}{q_\theta(Z)}\right].
\]
Each reference point is evaluated under every trial component. The density ratios account for changes in the trial distribution while the sampling distribution remains fixed. Since $\E_R f=\sum_{h=1}^{K}\widetilde w_h\E_{R_h}f$, replacing each component expectation by its sample average yields
\begin{equation}
 \widehat{\mathcal J}(\theta)=\frac{1}{N}\sum_{h=1}^{K}\widetilde w_h
 \sum_{b=1}^{N}\sum_{k=1}^{K}\frac{w_kq_k(z_{hb};\theta)}{r(z_{hb})}
 \log\frac{q_k(z_{hb};\theta)}{q_\theta(z_{hb})}.
 \label{eq:mixture-importance}
\end{equation}
Here $h$ indexes reference components, $b$ indexes points within them, and $k$ indexes trial components. Since each reference component supplies $N$ points, its average receives weight $\widetilde w_h$. The sum over $k$ evaluates the component labels analytically. The numerical objective is $\widehat{\mathcal L}(\theta)=\sum_{k=1}^{K}w_k\mathcal L_k-\widehat{\mathcal J}(\theta)$.

\paragraph{Joint optimization and validation.}
With the reference fixed, bounded L-BFGS-B \citep{byrd1995limited} updates all weight logits, inclusion logits, active means and log variances. Differentiation includes the trial density ratios and weights, so inclusion probabilities can change without differentiating sampled Bernoulli thresholds. Appendix~\ref{app:mixture-gradients} gives the finite-objective derivatives.

A proposed update must pass overlap and importance-weight checks before local validation compares it with $Q_{\theta^{(t)}}$ using integration batches separate from optimization. For $M\geq2$ independent scrambles, let $\overline\Delta$ be the mean candidate-minus-comparator objective difference and $s_\Delta$ its estimated standard error. Accept when
\begin{equation}
 \overline\Delta<-\max\{10^{-4},3s_\Delta\},
 \label{eq:empirical-acceptance}
\end{equation}
with information estimates consistent with $0\leq\mathcal J(\theta)\leq H(w)$. Retain an accepted update as $\theta^{(t+1)}$. Otherwise set $\theta^{(t+1)}=\theta^{(t)}$. The next local iteration refreshes the reference and points using the retained parameters. Algorithm~\ref{alg:joint-refinement} specifies this loop. The empirical rule assesses integration variability; sufficient error bounds for exact descent appear in Appendix~\ref{app:mixture-acceptance}.

\paragraph{Adaptive component expansion.}
Algorithm~\ref{alg:adaptive-mixture} starts from $Q^{(1)}=Q_{\mathrm{MF}}$. At stage $k$, it tries three $(k+1)$-component candidates from $Q^{(k)}$ in order. Small and large splits replace the largest-weight component by two copies with half its weight and opposite mean displacements at scales $0.5$ and $1$. The residual product appends a residual-guided product with weight $0.1$ and rescales existing weights by $0.9$. Appendix~\ref{app:reference-adaptive} gives construction details.

Each candidate $\widetilde Q^{(k+1)}$ undergoes joint refinement at fixed size, resetting $t$ to zero. Expansion validation then applies \eqref{eq:empirical-acceptance} and the information bounds against the incumbent $Q^{(k)}$ using fresh integration draws. The first accepted candidate becomes $Q^{(k+1)}$ and ends the stage. Here $K$ is the returned component count. Final assessment is separate from fitting and fallback validation.

\paragraph{Restricted refinement strategies.}
Frozen refinement fixes existing component distributions after proposal initialization, updating all weights and the new component. Stagewise fitting optimizes $Q=(1-\zeta)Q^{(k)}+\zeta G$ over a new product $G$ and its weight $\zeta$, holding the incumbent $Q^{(k)}$ fixed, including its relative component weights \citep{miller2017boosting}. Joint refinement updates existing component parameters as well. These restrictions distinguish the fitting strategies compared in Section~\ref{sec:numerical}.

\FloatBarrier

\section{Experiments}
\label{sec:numerical}

We generate 550 independent datasets with $n=80$ and five signals of magnitude $0.7$. Each group contains three predictors with pairwise population correlation $\rho$ and one randomly chosen signal. Groups are mutually independent; all other predictors are independent of one another and of the groups. The one- and two-group designs have four and three additional signals, respectively, with randomly permuted columns. We use 50 datasets per $(p,\rho)$ cell: $p=10,20,30,100$ and $\rho=0.7,0.9$ for one group, and $p=10$ and $\rho=0.7,0.9,0.99$ for two groups. Fits use $\sigma=\tau=1$, $\omega=5/p$, ten MFVI starts and at most ten components.

At $p=10$, support enumeration gives exact references for reverse KL, PIP, grouped-support TV and covariance errors. Larger dimensions use eligible Markov chain Monte Carlo~(MCMC) references for PIP and support errors. Prediction mean squared error (MSE) uses 1,000 independent test rows. Paired intervals are descriptive, unadjusted 95\% $t$ intervals across datasets. Appendix~\ref{app:numerical} gives designs, metrics and reference checks. Code and saved simulation results are publicly available.\footnote{\url{https://github.com/lihanqing1997/adaptive-mixture-spike-slab}}

\subsection{Posterior approximation}
All 550 adaptive mixtures improve the independently evaluated objective over MFVI; the objective difference equals the reverse-KL difference within a dataset. PIP, grouped-support and covariance errors also decrease on every exact-reference dataset. Figure~\ref{fig:joint-errors} shows these dependence gains on the 100 one-group datasets at $p=10$: support TV covers the three grouped predictors, while covariance error covers all coefficients. At $\rho=0.7,0.9$, mean reverse KL falls from $0.377$ to $0.090$ and from $0.692$ to $0.338$, respectively (Table~\ref{tab:formal-posterior}).

\begin{figure}[htbp]
\centering
\includegraphics[width=.82\linewidth]{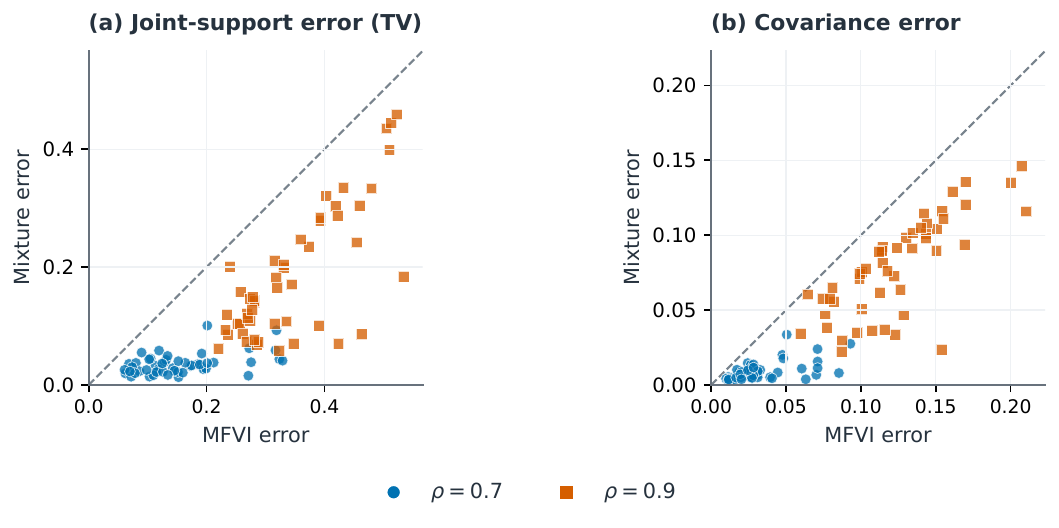}
\caption{One-group support and covariance errors: mixture versus MFVI.}
\label{fig:joint-errors}
\end{figure}

At $p=20,30$, mean PIP errors fall by $24.7$--$41.0\%$ and grouped-support TV by $22.8$--$51.0\%$, with paired intervals below zero. At $p=100$, both error intervals include zero at both correlations (Table~\ref{tab:formal-paired}). Two-group mean KL falls by $67.5\%$, $42.8\%$ and $20.9\%$ relative to MFVI as $\rho$ increases; support TV falls by $62.1\%$, $30.1\%$ and $12.7\%$, but remains $0.646$ at $\rho=0.99$. These approximation gains do not establish a general prediction advantage (Appendix~\ref{app:one-triplet-comparisons}).

\subsection{Objective and refinement comparisons}
On the first 20 two-group datasets per correlation, four arms start from the same $K=5$ candidate prepared using the direct objective. Each runs at caps of 16 and 128 refreshes under the same 60-second limit and safeguards, with acceptance based on its own objective. Table~\ref{tab:controlled-kl-contrasts} reports direct joint minus comparator: means and bracketed 95\% paired intervals, with negative values favoring direct joint.

\begin{table}[htbp]
\centering\small\setlength{\tabcolsep}{3pt}
\caption{Direct-joint-minus-comparator KL at both refresh caps.}
\label{tab:controlled-kl-contrasts}
\begin{tabular*}{\linewidth}{@{\extracolsep{\fill}}rlcc@{}}
\toprule
$\rho$ & Comparator & Cap 16 & Cap 128\\
\midrule
0.7 & Augmented joint & $-0.593\;[-0.624,-0.562]$ & $-0.684\;[-0.790,-0.578]$\\
0.7 & Direct frozen & $-0.044\;[-0.060,-0.028]$ & $-0.063\;[-0.086,-0.040]$\\
0.7 & Direct stagewise & $-0.048\;[-0.065,-0.030]$ & $-0.068\;[-0.091,-0.044]$\\
\midrule
0.9 & Augmented joint & $-0.559\;[-0.616,-0.501]$ & $-0.688\;[-0.763,-0.612]$\\
0.9 & Direct frozen & $-0.055\;[-0.072,-0.037]$ & $-0.097\;[-0.130,-0.064]$\\
0.9 & Direct stagewise & $-0.062\;[-0.079,-0.044]$ & $-0.103\;[-0.136,-0.070]$\\
\midrule
0.99 & Augmented joint & $-0.526\;[-0.561,-0.491]$ & $-0.653\;[-0.736,-0.570]$\\
0.99 & Direct frozen & $-0.036\;[-0.048,-0.025]$ & $-0.092\;[-0.140,-0.043]$\\
0.99 & Direct stagewise & $-0.042\;[-0.054,-0.030]$ & $-0.097\;[-0.145,-0.048]$\\
\bottomrule
\end{tabular*}
\end{table}

Direct joint has lower direct posterior KL than augmented joint on all 60 datasets at both caps; all nine mean contrasts per cap favor direct joint. Increasing the cap enlarges every mean advantage, with eight of nine change intervals excluding zero. This supports the direct objective and joint refinement under common initialization. The benefit is metric-dependent: at $\rho=0.99$, grouped-support TV favors augmentation at cap 16 but direct joint at cap 128. Appendix~\ref{app:controlled-local} gives absolute errors, budget sensitivity and stopping outcomes.

\subsection{Complete fitting procedures}
On all 150 two-group datasets, joint, frozen and stagewise fitting share the direct objective, saved MFVI baseline, 60-second search allowance and ten-component cap. Their proposals, search paths and attained sizes can differ, so this comparison assesses complete procedures. Figure~\ref{fig:refinement-effects} shows joint-minus-comparator means and 95\% paired intervals for KL and support TV over the six grouped predictors.

Joint refinement has lower mean KL than frozen refinement at every correlation. Against stagewise fitting, KL and support TV favor joint refinement at $\rho=0.7$. At $\rho=0.9$, KL favors joint refinement, covariance favors stagewise fitting, and TV and PIP intervals include zero. At $\rho=0.99$, stagewise fitting has lower KL, TV, PIP and covariance errors. All prediction intervals between mixture strategies include zero. Thus the controlled local gains do not imply uniform superiority of the complete procedure (Appendix~\ref{app:two-groups}).

\begin{figure}[htbp]
\centering
\includegraphics[width=1.0\linewidth]{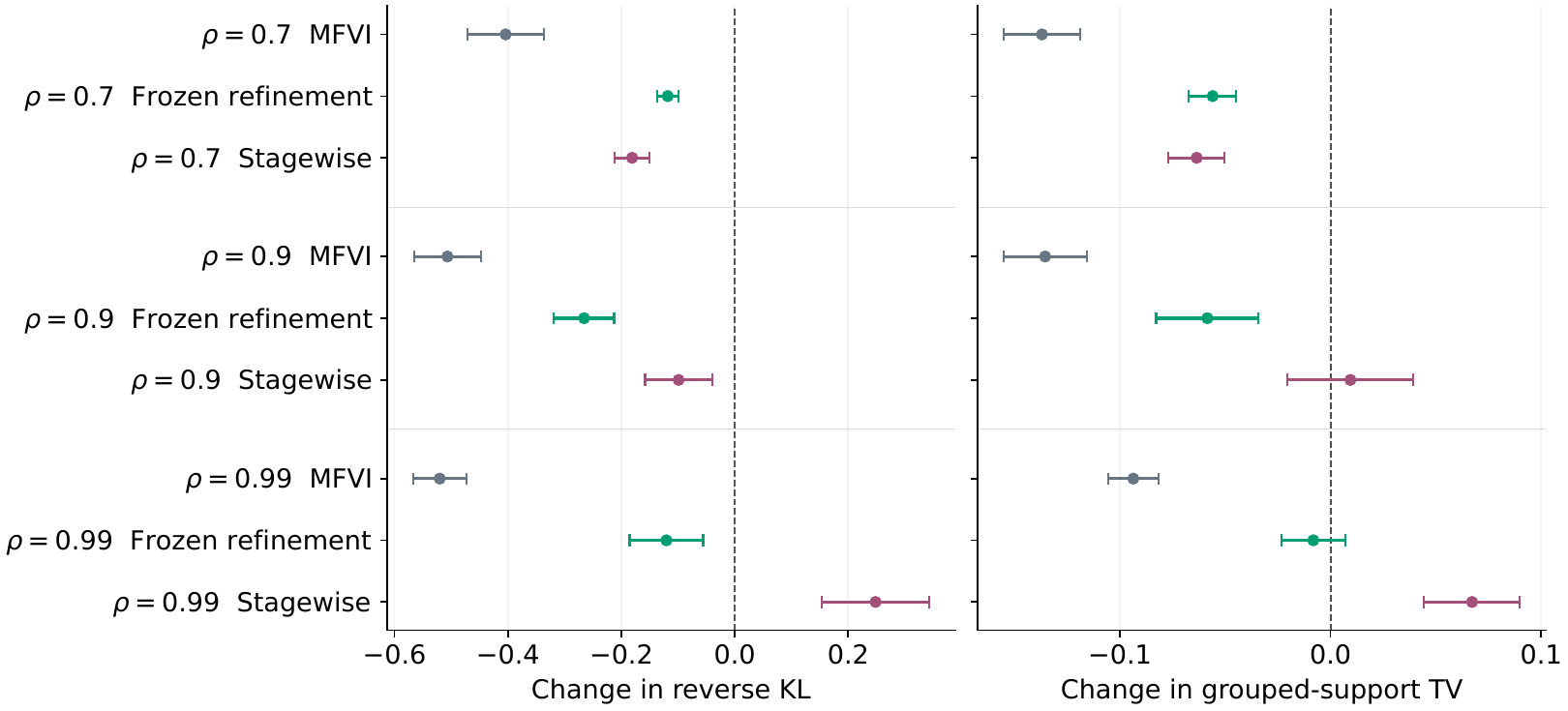}
\caption{KL and grouped-support TV contrasts between fitting strategies.}
\label{fig:refinement-effects}
\end{figure}

\section{Theoretical analysis}
\label{sec:asymptotic}
We relate mixture capacity and optimization error to posterior approximation, contraction, support selection and a Gaussian limit. Proofs are in Appendix~\ref{app:new-theory}.

\subsection{Mixture capacity}
\label{sec:mixture-capacity}
For the nested families in \eqref{eq:mixture-family}, define $a_K=\inf_{Q\in\mathcal Q_K}\KL(Q\|\Pi)$. Infima avoid assuming that a minimizing parameter vector exists; $\mathcal Q_1$ is mean field.

\begin{proposition}[Information bound]
\label{prop:new-nesting}
The families satisfy $\mathcal Q_K\subseteq\mathcal Q_{K+1}$ and $\max\{0,a_1-\log K\}\leq a_K\leq a_{K-1}\leq a_1$ for $K\geq2$. Thus $K$ components improve the optimal reverse KL over mean field by at most $\log K$.
\end{proposition}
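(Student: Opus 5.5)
The plan is to handle the nesting and the monotone chain first, and then spend most of the effort on the lower bound $a_K\geq a_1-\log K$, which is the only substantive part.

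\emph{Nesting and monotonicity.} Given $Q_\theta=\sum_{k=1}^K w_kQ_k\in\mathcal Q_K$, append a $(K+1)$st component with weight $w_{K+1}=0$ and arbitrary parameters, for instance a copy of $Q_1$. This yields the same law in $\mathcal Q_{K+1}$, so $\mathcal Q_K\subseteq\mathcal Q_{K+1}$. Infima over larger sets are no larger, which gives $a_K\leq a_{K-1}\leq\cdots\leq a_1$. Nonnegativity $a_K\geq0$ is immediate from nonnegativity of KL.

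\emph{Lower bound.} Fix any $Q_\theta\in\mathcal Q_K$ with $\KL(Q_\theta\|\Pi)<\infty$; otherwise there is nothing to prove. I would use \eqref{eq:objective}, $\KL(Q\|\Pi)=\mathcal L(Q)+\log\mathcal Z$, together with Proposition~\ref{prop:mixture-information}. That proposition gives
\[
\KL(Q_\theta\|\Pi)=\sum_k w_k\mathcal L_k-\mathcal J(\theta)+\log\mathcal Z=\sum_k w_k\KL(Q_k\|\Pi)-\mathcal J(\theta).
\]
Each $Q_k$ with $w_k>0$ lies in $\mathcal Q_1$, so $\KL(Q_k\|\Pi)\geq a_1$. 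The weights sum to one, hence $\sum_k w_k\KL(Q_k\|\Pi)\geq a_1$. With $\mathcal J(\theta)\leq H(w)\leq\log K$ this gives $\KL(Q_\theta\|\Pi)\geq a_1-\log K$. Taking the infimum over $\mathcal Q_K$ and combining with $a_K\geq0$ yields $\max\{0,a_1-\log K\}\leq a_K$.

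\emph{Main obstacle.} The hard step is the finiteness hypothesis in Proposition~\ref{prop:mixture-information}. If some positive-weight component has $\mathcal L_k=\infty$, the identity cannot be applied as stated. I would resolve this by arguing directly that $\KL(Q_\theta\|\Pi)=\infty$ in that case. The KL of a mixture is finite only if each positive-weight component is absolutely continuous with respect to $\Pi$ and has finite KL. This holds because $\log q_\theta\geq\log(w_kq_k)$, which gives
\[
\KL(Q_\theta\|\Pi)\geq \sum_k w_k\KL(Q_k\|\Pi)+\sum_kw_k\log w_k,
\]
so finiteness of the left side forces every $\KL(Q_k\|\Pi)<\infty$. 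In fact this elementary inequality, $\KL(Q_\theta\|\Pi)\geq\sum_kw_k\KL(Q_k\|\Pi)-H(w)$, already gives the bound directly, since $H(w)\leq\log K$. I would present it as the primary route, with Proposition~\ref{prop:mixture-information} as the interpretation. It avoids the finiteness caveat entirely: both sides may be $+\infty$, and the zero-weight convention handles the boundary cases.
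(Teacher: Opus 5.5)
Your proposal is correct and follows the paper's proof. The paper also appends a zero-weight component for nesting and combines Proposition~\ref{prop:mixture-information} with $\mathcal J(\theta)\leq H(w)\leq\log K$ to obtain $\KL(Q\|\Pi)\geq a_1-\log K$; your direct inequality from $q_\theta\geq w_kq_k$ is the same bound the paper uses to prove $\mathcal J\leq H(w)$, applied to $\Pi$ instead of $Q_\theta$. The finiteness caveat you flag does not arise here: $v_{kj}>0$ and $0<\omega<1$ make every component objective in \eqref{eq:mixture-component-cost} finite.
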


For a Gaussian target $N_s(m,A^{-1})$ on a fixed support, the product-Gaussian minimum is $a_1=\log\{\prod_{j=1}^s A_{jj}/\det A\}/2$. If this grows proportionally to $s$, vanishing reverse KL requires $\log K$ to grow at least as quickly. This complements the covariance decomposition in \eqref{eq:mixture-covariance}.

\subsection{Support approximation}
\label{sec:support-approximation}
For a support $S$ with indicator $\one_S$, write $\pi_S=\Pi(\gamma=\one_S\mid y)$. Its active posterior is $N_{|S|}(m_S,A_S^{-1})$, where $A_S=\sigma^{-2}X_S^\top X_S+\tau^{-2}I_{|S|}$ and $m_S=A_S^{-1}\sigma^{-2}X_S^\top y$. Its product-Gaussian error is $d_S=\log\{\prod_{j\in S}(A_S)_{jj}/\det A_S\}/2$, with $d_\varnothing=0$.

\begin{samepage}
\begin{proposition}[Support approximation]
\label{prop:support-capacity}
For a nonempty collection $\mathcal A$ of at most $K$ supports, put $r_{\mathcal A}=\sum_{S\in\mathcal A}\pi_S$ and $\widetilde\pi_S=\pi_S/r_{\mathcal A}$. Then $a_K\leq-\log\{\sum_{S\in\mathcal A}\pi_Se^{-d_S}\}\leq-\log r_{\mathcal A}+\sum_{S\in\mathcal A}\widetilde\pi_Sd_S$. The first bound is the minimum over mixtures with one product Gaussian per retained support, attained by means $m_S$, variances $(A_S)_{jj}^{-1}$ and weights proportional to $\pi_Se^{-d_S}$.
\end{proposition}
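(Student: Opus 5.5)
The plan is to build an explicit member of $\mathcal Q_K$ from the retained supports, compute its reverse KL exactly, minimize over its free parameters, and finish with Jensen's inequality. For each $S\in\mathcal A$, let $G_S$ be the product law that puts $\alpha_j=1$ and a Gaussian factor $N(\mu_{Sj},v_{Sj})$ on coordinates $j\in S$, and $\alpha_j=0$ on $j\notin S$. Boundary inclusion probabilities are allowed, so $G_S$ is a component of the family. Take $Q=\sum_{S\in\mathcal A}w_SG_S$; if $|\mathcal A|<K$, pad with zero-weight components, which is legitimate since the families are nested (Proposition~\ref{prop:new-nesting}). Distinct supports give mutually singular components, because $G_S$ is concentrated on $\{\gamma=\one_S\}$. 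Consequently $q_\theta=w_Sg_S$ on that event, and the mutual-information term equals $H(w)$.

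Next I decompose the posterior by support: $\Pi=\sum_S\pi_S\,\delta_{\one_S}\otimes N(m_S,A_S^{-1})$ on the active coordinates, with the inactive coordinates at $(0,0)$. This follows by completing the square in \eqref{eq:posterior}. Because the components sit on disjoint events, the chain rule for relative entropy gives
\[
\KL(Q\|\Pi)=\sum_{S}w_S\log\frac{w_S}{\pi_S}+\sum_S w_S\,\KL\{\textstyle\bigotimes_{j\in S}N(\mu_{Sj},v_{Sj})\,\|\,N(m_S,A_S^{-1})\}.
\]
For each $S$ I minimize the inner Gaussian KL over product Gaussians. The standard mean-field Gaussian computation gives the optimum $\mu_S=m_S$, $v_{Sj}=(A_S)_{jj}^{-1}$, with minimal value $d_S$. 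The $S=\varnothing$ case is trivial, with $d_\varnothing=0$.

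It then remains to minimize $\sum_Sw_S\{\log(w_S/\pi_S)+d_S\}$ over the simplex. This equals $\KL(w\|\pi e^{-d}/c)-\log c$ with $c=\sum_S\pi_Se^{-d_S}$. The minimum is therefore $-\log c$, attained at $w_S\propto\pi_Se^{-d_S}$, which gives the first bound and its attainment claim, and $a_K\le -\log c$ follows. For the second inequality I write $c=r_{\mathcal A}\sum_S\widetilde\pi_Se^{-d_S}$ and apply Jensen to the convex function $x\mapsto e^{-x}$: $\sum_S\widetilde\pi_Se^{-d_S}\ge e^{-\sum_S\widetilde\pi_Sd_S}$. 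Taking $-\log$ yields the stated bound.

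The main obstacle is stating the claim ``minimum over mixtures with one product Gaussian per retained support'' carefully. One must check that the disjoint-support structure makes the KL split exactly as above, with no overlap term, so that the joint minimization separates into independent per-support minimizations plus a weight problem. One must also justify the product-Gaussian minimum $d_S$. I would prove the latter by writing $\KL=\tfrac12\{\sum_j(A_S)_{jj}v_j+(\mu-m_S)^\top A_S(\mu-m_S)-s-\log\det A_S-\sum_j\log v_j\}$ and minimizing coordinatewise.
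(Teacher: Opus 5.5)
Your proposal is correct and follows essentially the same route as the paper: an explicit mixture with one product Gaussian per retained support, the chain rule over the disjoint support events, the conditional mean-field minimizer giving $d_S$, rewriting the weight problem as a divergence to $w^*_S\propto\pi_Se^{-d_S}$ minus $\log\sum_{S\in\mathcal A}\pi_Se^{-d_S}$, and Jensen's inequality. The paper additionally states two points you leave implicit: $A_S$ is positive definite even when $X_S$ is rank deficient, and every $\pi_S>0$, so the terms $\log(w_S/\pi_S)$ are finite.
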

\end{samepage}

Within this subclass, $w_S^\star/w_T^\star=(\pi_S/\pi_T)\exp\{-(d_S-d_T)\}$. Thus conditional dependence can distort support weights as well as coefficient uncertainty. The second bound separates omitted mass from retained conditional error: two competing singleton supports of total mass $1-\delta$ give $a_2\leq-\log(1-\delta)$. Large $d_S$ may instead require several components on a support. General mixture components can span many supports; the construction uses deterministic supports and boundary inclusion probabilities. Appendix~\ref{app:capacity-proof} also bounds $d_S$ through normalized precision matrices.

At sample size $n$, define $a_{K,n}=\inf_{Q\in\mathcal Q_K}\KL(Q\|\Pi_n)$. For a computed $Q_n\in\mathcal Q_{K_n}$, the exact decomposition is $\KL(Q_n\|\Pi_n)=a_{K_n,n}+e_n$, where $e_n=\mathcal L_n(Q_n)-\inf_{Q\in\mathcal Q_{K_n}}\mathcal L_n(Q)\geq0$. These quantify family approximation and within-family optimization error, respectively. Appendix~\ref{app:new-theory} gives numerical error bounds and their restricted-domain qualification.

\subsection{Asymptotic results}
\label{sec:statistical-guarantees}
Let $\mathbb P_{\beta^0}=N_n(X\beta^0,\sigma^2I_n)$ be the sampling law for fixed $X$, and write $S_0=\{j:\beta_j^0\ne0\}$, $s_0=|S_0|$ and $S=\{j:\gamma_j=1\}$. Quantities may vary with $n$; $K_n\geq1$ and $Q_n\in\mathcal Q_{K_n}$ may depend on $Y$. All stochastic orders and convergence in probability below refer to $\mathbb P_{\beta^0}$, with numerical seeds fixed.

\begin{assumption}[Sparse regression]
\label{ass:new-contraction}
We have $p\geq n\geq3$, $s_0\geq1$, $s_0\log p=o(n)$ and $\max_j\|X_j\|^2/n\leq C_X$. The known $\sigma^2$ and slab variance $\tau^2$ are bounded above and away from zero, $\|\beta^0\|_\infty\leq B$, and $\omega_n=p^{-a}$, with fixed $C_X,B<\infty$ and $a>11$.
\end{assumption}

\begin{theorem}[Contraction]
\label{thm:new-contraction}
Under Assumption~\ref{ass:new-contraction}, if $e_n=O_{\mathbb P_{\beta^0}}(s_0\log p)$, then $\E_{Q_n}\|X(\beta-\beta^0)\|^2=O_{\mathbb P_{\beta^0}}(\sigma^2s_0\log p)$ and $\E_{Q_n}|S|=O_{\mathbb P_{\beta^0}}(s_0)$. For every $M_n\to\infty$, $Q_n\{\|X(\beta-\beta^0)\|/\sqrt n>\allowbreak M_n\sigma\sqrt{s_0\log p/n}\}\allowbreak\to0$ in $\mathbb P_{\beta^0}$-probability. If integers $m_n$ satisfy $s_0/m_n\to0$ and the sparse eigenvalue condition $\inf_{0<|T|\leq m_n+s_0}\allowbreak\lambda_{\min}(X_T^\top X_T/n)\geq c_X>0$ for a fixed $c_X$, the same conclusion holds with $\|\beta-\beta^0\|$ replacing $\|X(\beta-\beta^0)\|/\sqrt n$.
\end{theorem}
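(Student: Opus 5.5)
The plan follows the standard route for variational contraction, as in Ray and Szabó and in Zhang and Gao. We combine a posterior concentration statement with the Donsker--Varadhan change-of-measure inequality and then bound the total KL of $Q_n$ to $\Pi_n$.

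\emph{Step 1: bound $\KL(Q_n\|\Pi_n)$ by $O(s_0\log p)$.} By the decomposition after Proposition~\ref{prop:support-capacity}, $\KL(Q_n\|\Pi_n)=a_{K_n,n}+e_n\le a_{1,n}+e_n$, using Proposition~\ref{prop:new-nesting}. So it suffices to bound $a_{1,n}$. We exhibit one product law $Q^\ast$. It puts inclusion probability one on $S_0$ and zero elsewhere, and on $S_0$ it uses independent Gaussians $N(\beta^0_j,v)$ with $v\asymp 1/(n C_X)$. Using \eqref{eq:objective}, $\KL(Q^\ast\|\Pi_n)=\E_{Q^\ast}\ell+\KL(Q^\ast\|P)+\log\mathcal Z$. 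We bound the three pieces separately.
\begin{itemize}
\item The KL to the prior is at most $s_0\{\log(1/\omega_n)+O(\log n)+B^2/\tau^2\}+(p-s_0)\log\{1/(1-\omega_n)\}=O(s_0\log p)$, using $\omega_n=p^{-a}$ and $p\,\omega_n\to0$.
\item Centering, $\E_{Q^\ast}\ell-\ell(\beta^0)\le \sum_{j\in S_0}\|X_j\|^2v/(2\sigma^2)=O(s_0)$.
\item For $\log\mathcal Z+\ell(\beta^0)$, we need a lower bound on the evidence ratio $\int e^{-(\ell-\ell(\beta^0))}dP$. The same change-of-measure argument gives $\log\mathcal Z\ge-\E_{Q^\ast}\ell-\KL(Q^\ast\|P)$. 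That inequality points the wrong way here, so we need an \emph{upper} bound on $\log\mathcal Z+\ell(\beta^0)$. We obtain it from $\E_{\beta^0}\int e^{-(\ell-\ell(\beta^0))}dP=1$ and Markov's inequality, which gives $\log\mathcal Z+\ell(\beta^0)=O_{\mathbb P}(1)$.
\end{itemize}
Together these give $\KL(Q_n\|\Pi_n)=O_{\mathbb P}(s_0\log p)$.

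\emph{Step 2: posterior exponential concentration.} Under Assumption~\ref{ass:new-contraction}, the exact posterior satisfies the following. For the events $B_n=\{\|X(\beta-\beta^0)\|^2>M\sigma^2s_0\log p\}\cup\{|S|>Ms_0\}$, with probability tending to one, $\E_{\Pi_n}e^{c\,f}\le e^{C s_0\log p}$. Here $f=\|X(\beta-\beta^0)\|^2/\sigma^2$, or $f=|S|\log p$.

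This is the main obstacle. It requires proving the exponential moment version of the Castillo--Schmidt-Hieber--van der Vaart contraction and dimension bounds for the point-mass prior with a Gaussian slab. The plan for the proof:
\begin{itemize}
\item Write $\Pi_n(B)=\int_B e^{-(\ell-\ell(\beta^0))}dP\,\big/\,\text{(evidence ratio)}$.
\item Lower-bound the evidence ratio by $e^{-Cs_0\log p}$ via the $Q^\ast$ construction of Step 1, applied in the reverse direction.
\item Control the numerator's expectation. For each support $S$, integrate the Gaussian likelihood ratio with a tempering $\lambda<1$ (or split the noise term $\varepsilon^\top X(\beta-\beta^0)$ and use $\E e^{\varepsilon^\top u}$). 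Summing over supports gives factors $\binom{p}{s}\omega_n^{s}\le p^{-(a-1)s}$. The large exponent $a>11$ leaves room to absorb the $p^{s}$ union factors and the slab normalisation terms.
\end{itemize}

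\emph{Step 3: transfer to $Q_n$.} By Donsker--Varadhan, $c\,\E_{Q_n}f\le\KL(Q_n\|\Pi_n)+\log\E_{\Pi_n}e^{cf}=O_{\mathbb P}(s_0\log p)$. This yields both expectation claims. The probability claim follows from Markov's inequality applied to $\E_{Q_n}\|X(\beta-\beta^0)\|^2/(M_n^2\sigma^2 s_0\log p)\to0$.

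For the $\ell_2$ statement, apply the same argument to $\E_{Q_n}|S|=O_{\mathbb P}(s_0)$. Markov's inequality gives $Q_n(|S|>m_n)=O_{\mathbb P}(s_0/m_n)\to0$. On $|S|\le m_n$, the vector $\beta-\beta^0$ is supported on at most $m_n+s_0$ coordinates. The sparse eigenvalue condition then gives $\|\beta-\beta^0\|^2\le \|X(\beta-\beta^0)\|^2/(nc_X)$, and the previous bound concludes.
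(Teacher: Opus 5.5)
Your argument is correct in outline. It is packaged differently from the paper's, but once unpacked it largely reduces to it.

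\emph{The paper's route.} The paper never uses the exact posterior or the evidence. It works on one candidate-free event $\mathcal E_n$, a union-bounded chi-square bound $\|H_{S\cup S_0}\varepsilon\|^2\le10\sigma^2(|S|+s_0)\log p$ over all supports. On that event it proves the pointwise bound $\ell(\beta)-\ell(\beta^0)\ge\|X(\beta-\beta^0)\|^2/(4\sigma^2)-10(|S|+s_0)\log p$. It bounds $\KL(Q\|P)\ge(a-1)(\log p)\E_Q|S|-1$ through the entropy of the support law. It then combines both with $\mathcal L_n(Q_n)-\ell(\beta^0)\le C_0s_0\log p+e_n$ from the truth-centred product; $a>11$ is what lets $a-1$ absorb the $10$.

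\emph{Your route collapses onto it.} Since $\KL(Q_n\|\Pi_n)=\mathcal L_n(Q_n)+\log\mathcal Z$ and $\log\E_{\Pi_n}e^{cf}=\log\int e^{cf-\ell}\,\mathrm{d}P-\log\mathcal Z$, the evidence cancels in your Donsker--Varadhan step. What remains is $c\,\E_{Q_n}f\le\mathcal L_n(Q_n)-\ell(\beta^0)+\log\int e^{cf-\{\ell-\ell(\beta^0)\}}\,\mathrm{d}P$. So the Markov bound on $\log\mathcal Z+\ell(\beta^0)$ in Step~1 and the $Q^\ast$ lower bound on the evidence in Step~2 are superfluous, though harmless. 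With the paper's event, the remaining prior integral $\int p^{\kappa|S|}\,\mathrm{d}P\le\exp(p^{1+\kappa-a})$, taken at $\kappa=a-1$, is exactly the dual of the paper's entropy bound.

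\emph{What each buys.} The paper's formulation is more direct. It is a property of the objective alone, valid for every finite-objective $Q$ at once, which gives Corollary~\ref{cor:baseline-contraction} immediately. Yours, as written, routes through exponential moments of the exact posterior, a stronger intermediate statement than is needed. In exchange, your framing is modular and offers one real gain if you exploit it. For $f=|S|\log p$ you may average over the noise, since $\E_{\beta^0}\int p^{c|S|}e^{-\{\ell-\ell(\beta^0)\}}\,\mathrm{d}P=\int p^{c|S|}\,\mathrm{d}P$. Markov's inequality with $c=a-1$ then gives $\E_{Q_n}|S|=O_{\mathbb P_{\beta^0}}(s_0)$ for any $a>1$. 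The prediction bound follows on $\mathcal E_n$ by dropping $\KL(Q_n\|P)\ge0$ and using $\E_{Q_n}\|X(\beta-\beta^0)\|^2/(4\sigma^2)\le\mathcal L_n(Q_n)-\ell(\beta^0)+10(\E_{Q_n}|S|+s_0)\log p$. This two-step version would relax the paper's $a>11$ to $a>1$.

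One caution for Step~2. The option of averaging the untempered numerator over the noise fails for $f=\|X(\beta-\beta^0)\|^2/\sigma^2$. The reason is that $\E_{\beta^0}\int e^{cf}e^{-\{\ell-\ell(\beta^0)\}}\,\mathrm{d}P=\int e^{cf}\,\mathrm{d}P$, which is infinite under the Gaussian slab once some $\|X_j\|^2\ge\sigma^2/(2c\tau^2)$. For this functional you need one of two tools:
\begin{itemize}
\item the paper's uniform event, which for $c\le1/4$ bounds the numerator by $p^{10s_0}(1-p^{-a}+p^{10-a})^p$; or
\item a tempered moment $\E_{\beta^0}N^\lambda$ with $\lambda<1-2c$, after bounding each support's Gaussian integral by $\exp[\|H_{S\cup S_0}\varepsilon\|^2/\{2(1-2c)\sigma^2\}]$.
\end{itemize}
Your Markov and sparse-eigenvalue steps for the tail and $\ell_2$ conclusions match the paper's.
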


\begin{corollary}[Baseline transfer]
\label{cor:baseline-contraction}
Under Assumption~\ref{ass:new-contraction}, let $Q_n^\circ$ be a possibly data-dependent baseline, with both $Q_n^\circ$ and $Q_n$ having finite objectives. If nonnegative random $r_n,\delta_n$ satisfy $\mathcal L_n(Q_n^\circ)-\ell_n(\beta^0)\leq r_n$, $\mathcal L_n(Q_n)-\mathcal L_n(Q_n^\circ)\leq\delta_n$ and $r_n+\delta_n=O_{\mathbb P_{\beta^0}}(s_0\log p)$, the prediction and model-size conclusions of Theorem~\ref{thm:new-contraction} hold, as does coefficient contraction under its sparse eigenvalue condition. Here $\ell_n$ is the loss at sample size $n$.
\end{corollary}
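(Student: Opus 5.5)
The corollary reduces to Theorem~\ref{thm:new-contraction}. That theorem is phrased through $e_n$, but its proof can only use $e_n$ through an upper bound on the centred objective $\mathcal L_n(Q_n)-\ell_n(\beta^0)$. So the plan is to show that the two hypotheses of the corollary give the same kind of bound directly, and then rerun the theorem's argument from that point.

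\textbf{Step 1: bound the centred objective.} Adding the two hypotheses gives
\[
\mathcal L_n(Q_n)-\ell_n(\beta^0)\le r_n+\delta_n=O_{\mathbb P_{\beta^0}}(s_0\log p).
\]
Both objectives are finite by assumption, so this subtraction is legitimate.

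\textbf{Step 2: isolate the form in which the theorem's proof uses $e_n$.} I would check that the proof of Theorem~\ref{thm:new-contraction} consists of two parts.

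First, an oracle comparison: under Assumption~\ref{ass:new-contraction}, there is a fixed product law concentrated near $\beta^0$ on $S_0$ with
\[
\inf_{Q\in\mathcal Q_{K_n}}\mathcal L_n(Q)-\ell_n(\beta^0)=O_{\mathbb P_{\beta^0}}(s_0\log p).
\]
Combined with $e_n$, this yields exactly the bound of Step~1.

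Second, a deterministic-plus-probabilistic inequality. For any $Q$ with finite objective it bounds
\[
\E_Q\|X(\beta-\beta^0)\|^2/(2\sigma^2)+c\,\E_Q|S|\log p
\]
by $\mathcal L_n(Q)-\ell_n(\beta^0)$ plus a stochastic term of order $s_0\log p$. The ingredients are:
\begin{itemize}
\item the expansion $\ell_n(\beta)-\ell_n(\beta^0)=\|X(\beta-\beta^0)\|^2/(2\sigma^2)-\varepsilon^\top X(\beta-\beta^0)/\sigma^2$;
\item $\KL(Q\|P)\gtrsim \E_Q|S|\,a\log p$, minus Gaussian entropy terms, using $\omega_n=p^{-a}$;
\item a uniform control of the noise cross term over supports, via a union bound over sparse models.
\end{itemize}
The condition $a>11$ absorbs the combinatorial factors. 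Since this second part is stated for arbitrary finite-objective $Q$, it applies verbatim to $Q_n$ once Step~1 is in place. That gives the bounds on $\E_{Q_n}\|X(\beta-\beta^0)\|^2$ and $\E_{Q_n}|S|$.

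\textbf{Step 3: the remaining conclusions.} Markov's inequality under $Q_n$ gives the tail statement with $M_n\to\infty$. For coefficient contraction, I would bound $Q_n(|S|>m_n)\le \E_{Q_n}|S|/m_n\to0$, using $s_0/m_n\to0$. On the event $|S|\le m_n$, the sparse eigenvalue condition applied to $T=S\cup S_0$ converts prediction error into $\ell_2$ error.

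\textbf{Main obstacle.} The delicate point is confirming that the theorem's proof does not use the near-optimality of $Q_n$ anywhere else, for example by comparing against the infimum inside the noise-control step. If it did, I would rewrite that step so that the comparison is against $\ell_n(\beta^0)$ rather than against the infimum. Because $Q_n^\circ$ and the random $r_n,\delta_n$ may be data-dependent, the union bound must be uniform over $Q$. I expect this to hold because the noise control is taken over supports and not over particular variational laws.
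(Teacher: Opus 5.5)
Your proposal is correct and matches the paper's argument. The paper also uses the fact that $\mathcal E_n$ does not depend on the candidate law, so $\E_Q\|X(\beta-\beta^0)\|^2/(4\sigma^2)+(a-11)(\log p)\E_Q|S|\leq\mathcal L_n(Q)-\ell_n(\beta^0)+10s_0\log p+1$ holds for every finite-objective $Q$; it then substitutes $Q=Q_n$ with the bound $r_n+\delta_n$ and reuses the Markov and sparse-eigenvalue steps. One minor difference: the paper gets the model-size term by data processing to the support law, $\KL(Q\|P)\geq\KL(Q_S\|P_S)$, so no Gaussian entropy terms need to be subtracted.
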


Theorem~\ref{thm:new-contraction} controls prediction error and bounds expected model size at the order of the true sparsity; the sparse eigenvalue condition also gives coefficient contraction. This rate is shared with mean field. Corollary~\ref{cor:baseline-contraction} shows that comparison with a suitably controlled baseline suffices to retain it. The next results concern support recovery and distributional approximation. Write $\pi_{*,n}=\Pi_n(S=S_0\mid Y)$ and recall $A_{S_0}=\sigma^{-2}X_{S_0}^\top X_{S_0}+\tau^{-2}I_{s_0}$.

\begin{theorem}[Selection consistency]
\label{thm:new-selection}
Suppose $p\to\infty$, $s_0\geq1$ and $\mathbb P_{\beta^0}\{1-\pi_{*,n}\leq p^{-b}\}\to1$ for fixed $b>0$. If the condition numbers of $A_{S_0}$ are uniformly bounded and $s_0+e_n=o_{\mathbb P_{\beta^0}}(\log p)$, then $Q_n(S=S_0)\to1$ in $\mathbb P_{\beta^0}$-probability. The thresholded support $\widehat S=\{j:Q_n(\gamma_j=1)>1/2\}$ satisfies $\mathbb P_{\beta^0}(\widehat S=S_0)\to1$.
\end{theorem}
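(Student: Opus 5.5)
Our approach is to bound $\KL(Q_n\|\Pi_n)$ by $a_{K_n,n}+e_n$, control $a_{K_n,n}$ through the mean-field member of $\mathcal Q_{K_n}$, and then transfer posterior concentration on $S_0$ to $Q_n$ with the data-processing inequality applied to the event $\{S=S_0\}$.

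\textbf{Bounding the family error.} By Proposition~\ref{prop:new-nesting}, $a_{K_n,n}\leq a_{1,n}$. Proposition~\ref{prop:support-capacity} with $K=1$ and the single support $\mathcal A=\{S_0\}$ gives
\[
a_{1,n}\leq-\log\pi_{*,n}+d_{S_0}.
\]
On the event $1-\pi_{*,n}\leq p^{-b}$, whose probability tends to one, we have $-\log\pi_{*,n}\leq 2p^{-b}\to0$. For $d_{S_0}$, we use that for positive definite $A$ with condition number at most $\kappa$, Hadamard-type bounds give $\log\{\prod_jA_{jj}/\det A\}\leq s\log\kappa$. Indeed, normalize $D^{-1/2}AD^{-1/2}$, where $D=\operatorname{diag}(A)$; its eigenvalues lie in $[1/\kappa,\kappa]$ up to constants, and its determinant is at least $\kappa^{-s}$. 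Hence $d_{S_0}\leq Cs_0$. The appendix bound through normalized precision matrices presumably gives exactly this. Combining,
\[
\KL(Q_n\|\Pi_n)\leq o(1)+Cs_0+e_n=o_{\mathbb P}(\log p).
\]

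\textbf{Transfer to $Q_n$.} Let $u=Q_n(S=S_0)$ and $\pi=\pi_{*,n}$. Data processing gives
\[
\KL(Q_n\|\Pi_n)\geq u\log(u/\pi)+(1-u)\log\{(1-u)/(1-\pi)\}\geq(1-u)\log\{1/(1-\pi)\}-\log 2,
\]
using that the binary entropy is at most $\log2$ and $u\log(1/\pi)\geq0$. On the good event this yields
\[
(1-u)\,b\log p\leq \KL(Q_n\|\Pi_n)+\log 2=o_{\mathbb P}(\log p).
\]
Therefore $1-u=o_{\mathbb P}(1)$. This is the main step. The only delicacy is ensuring that the $o_{\mathbb P}(\log p)$ bound on the KL holds; this requires $d_{S_0}=O(s_0)$ uniformly, and that is where the condition-number assumption enters.

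\textbf{Thresholded support.} For each $j$,
\[
|Q_n(\gamma_j=1)-\one\{j\in S_0\}|\leq Q_n(S\neq S_0)=1-u.
\]
Hence, once $1-u<1/2$, we get $\widehat S=S_0$ exactly, and $\mathbb P_{\beta^0}(1-u<1/2)\to1$ by the previous step. Note that Assumption~\ref{ass:new-contraction} is not needed here; only the stated posterior-concentration hypothesis is used.
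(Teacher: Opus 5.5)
Your proof is correct and matches the paper's argument: the same product-Gaussian comparator on $S_0$ (you reach it through Propositions~\ref{prop:new-nesting} and~\ref{prop:support-capacity}, while the paper states it directly in Proposition~\ref{prop:new-transfer}), the same binary data-processing bound applied to $\{S\ne S_0\}$, $d_{S_0}=O(s_0)$, and the same PIP-threshold step. Two small remarks. The ``up to constants'' hedge can be dropped, because the normalized matrix has eigenvalues exactly in $[1/\kappa,\kappa]$. The appendix normalized-precision bound is not the source of $d_{S_0}\le s_0\log\kappa/2$, since it requires $\|E_S\|_{\mathrm{op}}<1$; the paper instead uses $\prod_j(A_{S_0})_{jj}\le\lambda_{\max}^{s_0}$ and $\det A_{S_0}\ge\lambda_{\min}^{s_0}$.
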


Under Theorem~\ref{thm:new-selection}'s conditions, thresholding the variational PIPs at $1/2$ consistently recovers $S_0$. For the Gaussian approximation, put $B_n=\sigma^{-2}X_{S_0}^\top X_{S_0}$ and $\widehat\beta_{S_0}=(X_{S_0}^\top X_{S_0})^{-1}X_{S_0}^\top Y$. Let $\mathcal G_n$ fix the support at $S_0$, set inactive coefficients to zero and give active coefficients the law $N_{s_0}(\widehat\beta_{S_0},B_n^{-1})$.

\begin{samepage}
\begin{theorem}[Bernstein--von Mises]
\label{thm:new-bvm}
Suppose $s_0\geq1$, $\pi_{*,n}\to1$ in $\mathbb P_{\beta^0}$-probability, the eigenvalues of $B_n$ lie in $[cn,Cn]$ for fixed $0<c\leq C<\infty$, and $\tau^{-2}$ is bounded. If $s_0/n\to0$, $\|\beta_{S_0}^0\|^2/n\to0$ and $a_{K_n,n}+e_n=o_{\mathbb P_{\beta^0}}(1)$, then $\TV(Q_n,\mathcal G_n)\to0$ in $\mathbb P_{\beta^0}$-probability. The law of $\sqrt n(\beta_{S_0}-\widehat\beta_{S_0})$ under $Q_n$ has total-variation distance from $N_{s_0}(0,nB_n^{-1})$ tending to zero in $\mathbb P_{\beta^0}$-probability.
\end{theorem}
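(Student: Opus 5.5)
The plan is a triangle inequality in total variation, $\TV(Q_n,\mathcal G_n)\leq\TV(Q_n,\Pi_n)+\TV(\Pi_n,\mathcal G_n)$, with each term shown to vanish in $\mathbb P_{\beta^0}$-probability.

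\textbf{Variational term.} Pinsker's inequality and the exact decomposition $\KL(Q_n\|\Pi_n)=a_{K_n,n}+e_n$ from Section~\ref{sec:support-approximation} give
\[
\TV(Q_n,\Pi_n)\leq\{(a_{K_n,n}+e_n)/2\}^{1/2}=o_{\mathbb P_{\beta^0}}(1).
\]
No further work is needed here.

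\textbf{Posterior term.} Write $\Pi_n=\pi_{*,n}\Pi_n(\cdot\mid S=S_0)+(1-\pi_{*,n})\Pi_n(\cdot\mid S\neq S_0)$. Then
\[
\TV(\Pi_n,\mathcal G_n)\leq(1-\pi_{*,n})+\TV\{\Pi_n(\cdot\mid S=S_0),\mathcal G_n\}.
\]
The first piece vanishes by hypothesis. Conditional on $S=S_0$, the inactive coordinates equal $(0,0)$ under both laws, and the active coefficients are exactly $N_{s_0}(m_{S_0},A_{S_0}^{-1})$, as in Section~\ref{sec:support-approximation}. So it suffices to bound the TV between the Gaussians $N(m_{S_0},A_{S_0}^{-1})$ and $N(\widehat\beta_{S_0},B_n^{-1})$, using the closed-form Gaussian KL and Pinsker again.

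With $A_{S_0}=B_n+\tau^{-2}I$ and $B_n$ having eigenvalues in $[cn,Cn]$, the matrix $B_n^{-1/2}A_{S_0}B_n^{-1/2}=I+\tau^{-2}B_n^{-1}$ has eigenvalues $1+O(1/n)$ uniformly. The covariance part of the KL, $\tfrac12\{\operatorname{tr}(\cdot)-s_0-\log\det(\cdot)\}$, is therefore $O(s_0/n^2)\to0$.

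For the mean part, $A_{S_0}m_{S_0}=B_n\widehat\beta_{S_0}$, so $m_{S_0}-\widehat\beta_{S_0}=-\tau^{-2}A_{S_0}^{-1}\widehat\beta_{S_0}$. The Mahalanobis term is then
\[
(m_{S_0}-\widehat\beta_{S_0})^\top B_n(m_{S_0}-\widehat\beta_{S_0})\lesssim\tau^{-4}\|\widehat\beta_{S_0}\|^2/n.
\]
Since $\widehat\beta_{S_0}\sim N(\beta^0_{S_0},\sigma^2(X_{S_0}^\top X_{S_0})^{-1})$,
\[
\E\|\widehat\beta_{S_0}\|^2\leq2\|\beta^0_{S_0}\|^2+O(s_0/n),
\]
and the Mahalanobis term is $O_{\mathbb P}(\|\beta^0_{S_0}\|^2/n+s_0/n^2)=o_{\mathbb P}(1)$ by the hypotheses $\|\beta^0_{S_0}\|^2/n\to0$ and $s_0/n\to0$. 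This mean-shift bound is the step that uses the shrinkage assumptions. It is the only mildly delicate point, because $\widehat\beta_{S_0}$ is random and one has to check that the slab shrinkage bias is negligible on the scale $B_n^{-1/2}$.

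\textbf{Second claim.} The second statement follows from the first by the data-processing inequality for TV. Apply the measurable map $z\mapsto\sqrt n(\beta_{S_0}-\widehat\beta_{S_0})$, which is a deterministic function given $Y$. The pushforward of $\mathcal G_n$ under this map is exactly $N_{s_0}(0,nB_n^{-1})$, and TV does not increase under the map, so the distance tends to zero.
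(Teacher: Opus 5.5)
Your proposal is correct and follows essentially the same route as the paper's proof. The shared steps are the triangle inequality through $\Pi_n$, Pinsker with $\KL(Q_n\|\Pi_n)=a_{K_n,n}+e_n$, the exact off-support mass $1-\pi_{*,n}$, a closed-form Gaussian KL with covariance part $O(s_0/n^2)$ and mean shift $-\tau^{-2}A_{S_0}^{-1}\widehat\beta_{S_0}$ controlled by $\|\widehat\beta_{S_0}\|^2=O_{\mathbb P_{\beta^0}}(\|\beta^0_{S_0}\|^2+s_0/n)$, and data processing for the pushforward. One cosmetic point: your $B_n$-weighted Mahalanobis term belongs to the divergence of $N_{s_0}(m_{S_0},A_{S_0}^{-1})$ from $N_{s_0}(\widehat\beta_{S_0},B_n^{-1})$, whose covariance part involves $B_nA_{S_0}^{-1}$ rather than $B_n^{-1/2}A_{S_0}B_n^{-1/2}$; both matrices have eigenvalues $1+O(1/n)$, so the bound is unaffected.
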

\end{samepage}

Theorem~\ref{thm:new-bvm} gives a Gaussian approximation to the joint law of $\beta_{S_0}$ under $Q_n$. It is centered at the least-squares estimator on $S_0$ and is asymptotically accurate uniformly over measurable events.

\section{Discussion}
\label{sec:discussion}
The improvements in support probabilities and coefficient covariance show why dependence matters when several sparse explanations are plausible. Similar predictions can arise from different inclusion patterns, so better posterior approximation need not reduce prediction error. Posterior comparisons therefore reveal differences that predictive performance alone may miss.

Local refinement and complete fitting address different questions. Direct joint refinement lowers KL from the tested common initializations, but complete stagewise fitting is more accurate near collinearity. This contrast motivates examining proposals and search paths alongside refinement. The theory separates approximation from optimization error; the support-based bound further highlights coverage of plausible models and dependence within them. These distinctions suggest possible sources of error without identifying what limits the fitted mixtures.

Two directions follow. Crossing proposal strategies with joint and stagewise refinement under matched budgets would clarify their separate and combined effects. Varying mixture size and refinement effort independently would help assess whether remaining errors respond more to added capacity or further optimization.

\label{preprint:mainend}
\section*{AI use statement}
AI tools assisted with writing, proofs, simulation code, execution and result interpretation. The authors are responsible for all methods, results and conclusions.

\bibliographystyle{abbrvnat}
\bibliography{references}
\appendix

\section{Mixture calculations}
\label{app:mixture}

\subsection{Objective and moments}
\begin{proof}[Proof of Proposition~\ref{prop:mixture-information}]
Under the joint law of $(C,Z)$, let $q_k=\mathrm{d}Q_k/\mathrm{d}\nu$ and $q_\theta=\sum_kw_kq_k$. The relative-entropy chain rule gives
\[
 \sum_kw_k\KL(Q_k\|P)=\KL(Q_\theta\|P)+\sum_kw_k\KL(Q_k\|Q_\theta).
\]
The second term is $\mathcal J(\theta)$. Expected loss is linear in the mixture weights, which proves the stated objective identity. For each positive-weight component, $q_\theta\geq w_kq_k$, so $\KL(Q_k\|Q_\theta)\leq-\log w_k$. Averaging yields $\mathcal J(\theta)\leq H(w)\leq\log K$, while nonnegativity follows from relative entropy. Zero-weight components contribute nothing. The argument applies to the common mixed measure and requires no Lebesgue density at the inactive point.
\end{proof}

Each component objective is
\begin{align}
 \mathcal L_k={}&\frac{\|y-Xm_k\|^2+\sum_j\|X_j\|^2d_{kj}}{2\sigma^2}
 +\sum_j\left\{\alpha_{kj}\log\frac{\alpha_{kj}}{\omega}+(1-\alpha_{kj})\log\frac{1-\alpha_{kj}}{1-\omega}\right\}\notag\\
 &+\frac12\sum_j\alpha_{kj}\left\{\frac{v_{kj}+\mu_{kj}^2}{\tau^2}-1+\log\frac{\tau^2}{v_{kj}}\right\}.
 \label{eq:mixture-component-cost}
\end{align}

The convention $0\log0=0$ applies at boundary inclusion probabilities.

The component moments are $\E_{Q_k}\beta=m_k$ and $\operatorname{Cov}_{Q_k}(\beta)=\operatorname{diag}(d_k)$. Expanding the squared residual gives the loss in \eqref{eq:mixture-component-cost}. For each coordinate,
\[
 \KL(Q_{kj}\|P_j)=\KL\{\operatorname{Bernoulli}(\alpha_{kj})\|\operatorname{Bernoulli}(\omega)\}+\alpha_{kj}\KL\{N(\mu_{kj},v_{kj})\|N(0,\tau^2)\}.
\]
The Gaussian divergence formula gives the remaining terms. Conditioning on $C$ proves the covariance decomposition \eqref{eq:mixture-covariance} and joint inclusion formula in Section~\ref{sec:mixture-family}. The vectors $\sqrt{w_k}(m_k-\bar m)$ have a linear dependence, so their covariance contribution has rank at most $K-1$.

\subsection{Augmented objectives}
\label{app:augmentation}
For the representation in Section~\ref{sec:direct-augmentation}, let $T(\gamma,\beta^+)=(\gamma,\gamma\odot\beta^+)$ and $Q=T_\#Q^+$. On a support $S$, conditioning on $z=(\gamma,\beta)$ fixes $\beta_S^+=\beta_S$. The likelihood is independent of $\beta_{S^c}^+$, so the augmented posterior conditional law of these discarded variables is $\mathcal N_{S^c}$. Applying the relative-entropy chain rule to $T$ gives the decomposition in Section~\ref{sec:direct-augmentation}. Under a product law with independent indicators and latent Gaussians, each inactive coordinate contributes its Gaussian prior divergence, and averaging over indicators gives the displayed product penalty.

For any $Q$, attach independent prior draws to its inactive coordinates to construct an augmented law with zero conditional KL, hence $\inf_{Q^+:T_\#Q^+=Q}\KL(Q^+\|\Pi^+)=\KL(Q\|\Pi)$.
For a $K$-component mixture on $(\gamma,\beta)$ this completion can keep the same label and coordinate independence conditional on that label, but generally makes each latent Gaussian depend on its indicator. The extra cost is induced by independence between the indicator and latent Gaussian variable. The consequences of auxiliary independence are also discussed by \citet{spence2020flexible}. Their proposed method uses product spike-and-slab factors, while their separate boosting baseline uses augmented Bernoulli/Gaussian mixtures. The released boosting code uses stagewise residual-ELBO updates with a prescribed new-component weight schedule.\footnote{\href{https://github.com/jeffspence/non_overlapping_mixtures/blob/main/code/pyro_code_discrete.py}{Released implementation of the augmented boosting baseline}.}

\subsection{Mixture derivatives}
\label{app:mixture-gradients}
Work at positive weights and variances and interior inclusion probabilities. Differentiation under the integrals is justified on compact local parameter sets with mixture weights bounded away from zero, inclusion probabilities bounded away from zero and one, and variances bounded away from zero. Set $\eta_{kj}=\log\{\alpha_{kj}/(1-\alpha_{kj})\}$ and $t_{kj}=\log v_{kj}$. The component scores are $\partial_{\eta_{kj}}\log q_k(z)=\gamma_j-\alpha_{kj}$, $\partial_{\mu_{kj}}\log q_k(z)=\gamma_j(\beta_j-\mu_{kj})/v_{kj}$, and $\partial_{t_{kj}}\log q_k(z)=(\gamma_j/2)\{(\beta_j-\mu_{kj})^2/v_{kj}-1\}$. The Gaussian scores vanish at the inactive point, but the inclusion score does not. Thus the inactive contribution must remain in the integral.

Let $s_{kj}$ denote one of these scores and $\vartheta_{kj}$ its parameter, and put $D_k=\KL(Q_k\|Q_\theta)$. For softmax weight logits $u_k$,
\begin{equation*}
 \partial_{\vartheta_{kj}}\mathcal J(\theta)=w_k\E_{Q_k}\left[s_{kj}(Z)\log\frac{q_k(Z)}{q_\theta(Z)}\right],\qquad \partial_{u_k}\mathcal J(\theta)=w_k(D_k-\mathcal J(\theta)).
\end{equation*}
Indeed, differentiating $\mathcal J(\theta)=\sum_kw_k\int q_k\log q_k\,\mathrm{d}\nu-\int q_\theta\log q_\theta\,\mathrm{d}\nu$ gives the first identity because the normalization terms cancel. The unconstrained weight derivative is $D_k-1$, and the softmax derivative gives the second identity. Combined with the analytic component objectives, these are population derivatives of \eqref{eq:objective}.

A finite importance sum need not integrate a normalized density to exactly one. For the implemented mixture-proposal rule, let $a=(h,b)$ index a fixed reference point, put $\xi_a=\widetilde w_h/N$, and define $B_{ak}=\xi_aw_kq_k(z_a)/r(z_a)$, $R_{ak}=\log\{q_k(z_a)/q_\theta(z_a)\}$, and $\widehat M=\sum_{a,k}B_{ak}$. Then $\widehat{\mathcal J}(\theta)=\sum_{a,k}B_{ak}R_{ak}$. Direct differentiation, holding $\xi_a$, $r$, and $z_a$ fixed, gives
\begin{align}
 \partial_{\vartheta_{kj}}\widehat{\mathcal J}(\theta)&=\sum_aB_{ak}R_{ak}s_{kj}(z_a),\label{eq:finite-information-gradient}\\
 \partial_{u_k}\widehat{\mathcal J}(\theta)&=\sum_aB_{ak}(R_{ak}-1)-w_k(\widehat{\mathcal J}(\theta)-\widehat M).\notag
\end{align}
For the component derivative, the extra terms cancel pointwise because every component is included in the inner sum. For the weight derivative, the finite estimated masses need not equal their exact values, so the terms involving $\widehat M$ and $\sum_aB_{ak}$ must be retained. These formulas differentiate precisely the numerical objective supplied to L-BFGS-B, including the changing importance ratios. Finite differences check the derivatives, and independent integration draws assess numerical integration variability.

\subsection{Component gradients}
\begingroup\allowdisplaybreaks[1]
Write $\widetilde r_k=Xm_k-y$, $g_k=X^\top \widetilde r_k/\sigma^2$, $h_j=\|X_j\|^2/\sigma^2$, and $A_{kj}=(1/2)\{(v_{kj}+\mu_{kj}^2)/\tau^2-1+\log(\tau^2/v_{kj})\}$. Differentiation of \eqref{eq:mixture-component-cost} yields
\begin{align*}
 \partial_{\alpha_{kj}}\mathcal L_k
 &=g_{kj}\mu_{kj}+\tfrac12h_j\{v_{kj}+(1-2\alpha_{kj})\mu_{kj}^2\}
 +\log\frac{\alpha_{kj}(1-\omega)}{(1-\alpha_{kj})\omega}+A_{kj},\\
 \partial_{\mu_{kj}}\mathcal L_k
 &=\alpha_{kj}g_{kj}+h_j\alpha_{kj}(1-\alpha_{kj})\mu_{kj}+\alpha_{kj}\mu_{kj}/\tau^2,\\
 \partial_{v_{kj}}\mathcal L_k
 &=\tfrac12\alpha_{kj}(h_j+\tau^{-2}-v_{kj}^{-1}).
\end{align*}
\endgroup
The logit and log-variance derivatives follow by multiplying by $\alpha_{kj}(1-\alpha_{kj})$ and $v_{kj}$, respectively. With $\overline{\mathcal L}=\sum_kw_k\mathcal L_k$, the weight-logit derivative of the analytic component term is $w_k(\mathcal L_k-\overline{\mathcal L})$. Subtracting \eqref{eq:finite-information-gradient} completes the implemented finite-objective gradient.

\subsection{Acceptance bounds}
\label{app:mixture-acceptance}
Suppose numerical objective values have valid absolute error bounds $|\widehat{\mathcal L}(Q)-\mathcal L(Q)|\leq\epsilon$ and $|\widehat{\mathcal L}(Q')-\mathcal L(Q')|\leq\epsilon'$. For a requested decrease $\delta\geq0$, the rule $\widehat{\mathcal L}(Q')+\epsilon'<\widehat{\mathcal L}(Q)-\epsilon-\delta$ certifies exact descent, since
\[
 \mathcal L(Q')\leq\widehat{\mathcal L}(Q')+\epsilon'<\widehat{\mathcal L}(Q)-\epsilon-\delta\leq\mathcal L(Q)-\delta.
\]
Applying this argument successively gives the corresponding incumbent guarantee. The empirical scramble standard errors in \eqref{eq:empirical-acceptance} do not alone supply these error bounds, particularly when candidates are compared adaptively.

\section{Theoretical results}
\label{app:new-theory}

If a numerical objective $\widehat{\mathcal L}_n$ has uniform error at most $\epsilon_n$ over $\mathcal Q_{K_n}$, and its fitted value is within $\delta_{\mathrm{opt},n}$ of its global infimum, then $e_n\leq\delta_{\mathrm{opt},n}+2\epsilon_n$. If these bounds hold only on a restricted parameter domain, its infimum gap relative to $\mathcal Q_{K_n}$ must also be included in $e_n$. Local convergence and finitely many validation checks do not certify these global bounds.

\begin{proof}[Proof of Proposition~\ref{prop:new-nesting}]
Appending a zero-weight component gives $\mathcal Q_K\subseteq\mathcal Q_{K+1}$. For any mixture $Q=Q_\theta$, Proposition~\ref{prop:mixture-information} and the fact that $\mathcal L(Q)=\KL(Q\|\Pi)-\log\mathcal Z$ give
\[
 \KL(Q\|\Pi)=\sum_{k=1}^K w_k\KL(Q_k\|\Pi)-\mathcal J(\theta)
 \geq a_1-H(w)\geq a_1-\log K.
\]
Nonnegativity of KL and taking infima prove the claimed bounds. For completeness, if the target is $N_s(m,A^{-1})$, a product Gaussian with mean $b$ and diagonal covariance $D$ has divergence
\[
 \frac12\{(b-m)^\top A(b-m)+\operatorname{tr}(AD)-s-\log\det A-\log\det D\}.
\]
Its minimizer is $b=m$ and $D_{jj}=A_{jj}^{-1}$, proving the Gaussian mean-field formula in Section~\ref{sec:mixture-capacity}. Finally, the numerical error bound follows from
\[
 \mathcal L_n(Q_n)
 \leq\widehat{\mathcal L}_n(Q_n)+\epsilon_n
 \leq\inf_{Q\in\mathcal Q_{K_n}}\widehat{\mathcal L}_n(Q)+\delta_{\mathrm{opt},n}+\epsilon_n
 \leq\inf_{Q\in\mathcal Q_{K_n}}\mathcal L_n(Q)+\delta_{\mathrm{opt},n}+2\epsilon_n.\qedhere
\]
\end{proof}

\subsection{Capacity bound}\label{app:capacity-proof}
\begin{proof}[Proof of Proposition~\ref{prop:support-capacity}]
The Gaussian likelihood and slab prior give the stated conditional posterior by completing the square. The positive slab precision makes $A_S$ positive definite even when $X_S$ is rank deficient. The support masses are strictly positive under $0<\omega<1$ and positive finite $\sigma^2,\tau^2$.

Let $G_S$ be a product Gaussian with mean $b_S$ and positive diagonal covariance $V_S$ on support $S$, with inactive coefficients fixed at zero. Its conditional reverse KL is
\[
 c_S(G_S)=\frac12\left\{
 (b_S-m_S)^\top A_S(b_S-m_S)+\operatorname{tr}(A_SV_S)
 -|S|-\log\det A_S-\log\det V_S\right\}.
\]
It is minimized at $b_S=m_S$ and $(V_S)_{jj}=(A_S)_{jj}^{-1}$, with minimum $d_S$. The empty support has a unique conditional law and conditional KL zero. Hadamard's inequality gives $d_S\geq0$.

Distinct supports are disjoint events on the joint state space, so the KL chain rule for $Q=\sum_{S\in\mathcal A}w_SG_S$ gives
\[
 \KL(Q\|\Pi)
 =\sum_{S\in\mathcal A}w_S\log\frac{w_S}{\pi_S}
  +\sum_{S\in\mathcal A}w_Sc_S(G_S).
\]
Use the conditional minimizers, and let $B_{\mathcal A}=\sum_{S\in\mathcal A}\pi_Se^{-d_S}$. Define $w_S^*=\pi_Se^{-d_S}/B_{\mathcal A}$. For every probability vector $w$ on $\mathcal A$,
\[
 \sum_{S\in\mathcal A}w_S\left(\log\frac{w_S}{\pi_S}+d_S\right)
 =\KL(w\|w^*)-\log B_{\mathcal A}.
\]
This proves both the optimal weights and equality for the stated one-product-per-support subclass. Adding zero-weight components embeds this mixture in $\mathcal Q_K$. Finally,
\[
 -\log B_{\mathcal A}
 =-\log r_{\mathcal A}
  -\log\sum_{S\in\mathcal A}\widetilde\pi_Se^{-d_S}
 \leq-\log r_{\mathcal A}
       +\sum_{S\in\mathcal A}\widetilde\pi_Sd_S,
\]
by Jensen's inequality. This completes the proof.
\end{proof}

\paragraph{Precision bound.}
For a nonempty support let $D_S=\operatorname{diag}(A_S)$ and $E_S=D_S^{-1/2}A_SD_S^{-1/2}-I_{|S|}$. If $\|E_S\|_{\mathrm{op}}\leq\eta<1$, then
\begin{equation*}
 \frac{\|E_S\|_F^2}{4(1+\eta)}
 \leq d_S
 \leq\frac{\|E_S\|_F^2}{4(1-\eta)}
 \leq\frac{|S|\eta^2}{4(1-\eta)}.
\end{equation*}
Indeed, $d_S=-(1/2)\log\det(I_{|S|}+E_S)$ and $\operatorname{tr}(E_S)=0$. If $\lambda_1,\ldots,\lambda_{|S|}$ are the eigenvalues of $E_S$, then
\[
 d_S=\frac12\sum_i\{\lambda_i-\log(1+\lambda_i)\}
     =\frac12\sum_i\lambda_i^2
       \int_0^1\frac{t}{1+t\lambda_i}\,\mathrm{d}t.
\]
Bounding the denominator by $1-\eta$ and $1+\eta$ proves the result. Set $\|E_\varnothing\|_F=0$ for the empty support. Consequently, if this operator-norm condition holds uniformly over the retained supports with a common fixed $\eta<1$, then
\[
 a_K\leq-\log r_{\mathcal A}
       +\frac{1}{4(1-\eta)}
        \sum_{S\in\mathcal A}\widetilde\pi_S\|E_S\|_F^2.
\]
The Frobenius norm of the normalized precision perturbation accounts for dependence accumulated across active coordinates as support dimension grows.

\subsection{Proof of contraction}\label{app:contraction-statement}
We use the notation and assumptions of Section~\ref{sec:statistical-guarantees}.

\begin{proof}[Proof of Theorem~\ref{thm:new-contraction}]
Write $s=s_0$ and $\varepsilon=Y-X\beta^0$, and let $H_T$ be the orthogonal projection onto the span of $X_T$. For a chi-square variable $V$ with $r$ degrees of freedom, its moment generating function gives $\Pr\{V>r+2\sqrt{rt}+2t\}\leq e^{-t}$. For a support $S$ of size $k$, apply this inequality with $r\leq k+s$ and $t=(k+s)\log p$. A union bound gives an event $\mathcal E_n$, with $\mathbb P_{\beta^0}(\mathcal E_n^c)\leq p^{-s}(1+p^{-1})^p\leq ep^{-s}$, on which $\|H_{S\cup S_0}\varepsilon\|^2/\sigma^2\leq10(|S|+s)\log p$ simultaneously for all supports. This includes rank-deficient supports. Expanding the squared loss and applying Cauchy--Schwarz and $uv\leq u^2/4+v^2$ yield, on this event,
\begin{equation}
 \ell(\beta)-\ell(\beta^0)
 \geq\frac{\|X(\beta-\beta^0)\|^2}{4\sigma^2}-10(|S|+s)\log p.
 \label{eq:new-proof-loss}
\end{equation}

Let $Q_S$ be the support law induced by an arbitrary candidate $Q$. Comparison with the probability mass function $\rho(S)=p^{-|S|}/(1+p^{-1})^p$ bounds its entropy by $H(Q_S)\leq(\log p)\E_Q|S|+1$. Let $P_S$ be the Bernoulli support prior, with probability mass function $p_S$. Then $-\log p_S(S)\geq a|S|\log p$. Relative-entropy data processing therefore gives
\begin{equation}
 \KL(Q\|P)\geq\KL(Q_S\|P_S)\geq(a-1)(\log p)\E_Q|S|-1.
 \label{eq:new-proof-modelsize}
\end{equation}

As comparator, take the product law $Q^0$ with support fixed at $S_0$, active coefficients independently distributed as $N(\beta_j^0,\sigma^2/n)$, and inactive coefficients zero. It belongs to $\mathcal Q_1\subseteq\mathcal Q_{K_n}$ for every realization of $K_n$. Its centered expected loss is $(2n)^{-1}\sum_{j\in S_0}\|X_j\|^2\leq C_Xs/2$. Its prior divergence is
\begin{align*}
 \KL(Q^0\|P)={}&as\log p-(p-s)\log(1-p^{-a})\\
 &+\frac12\sum_{j\in S_0}\left\{\frac{\sigma^2/n+(\beta_j^0)^2}{\tau^2}-1+\log\frac{n\tau^2}{\sigma^2}\right\}.
\end{align*}
Assumption~\ref{ass:new-contraction}, $p\geq n$, and $-(p-s)\log(1-p^{-a})\leq2p^{1-a}$ eventually show that $\mathcal L(Q^0)-\ell(\beta^0)\leq C_0s\log p$. By the definition of $e_n$, $\mathcal L(Q_n)-\ell(\beta^0)\leq C_0s\log p+e_n$. Integrating \eqref{eq:new-proof-loss} and using \eqref{eq:new-proof-modelsize} now gives on $\mathcal E_n$
\begin{equation*}
 \frac{\E_{Q_n}\|X(\beta-\beta^0)\|^2}{4\sigma^2}
 +(a-11)(\log p)\E_{Q_n}|S|
 \leq(C_0+10)s\log p+e_n+1.
\end{equation*}
Because $a>11$, $e_n=O_{\mathbb P_{\beta^0}}(s\log p)$, and $\mathbb P_{\beta^0}(\mathcal E_n)\to1$, this proves both moment bounds. Markov's inequality proves prediction contraction.

Under the sparse eigenvalue condition in Theorem~\ref{thm:new-contraction}, on $\{|S|\leq m_n\}$ we have $\|X(\beta-\beta^0)\|^2/n\geq c_X\|\beta-\beta^0\|^2$. Hence the coefficient tail probability is bounded by $Q_n(|S|>m_n)+O_{\mathbb P_{\beta^0}}(M_n^{-2})=O_{\mathbb P_{\beta^0}}(s/m_n)+O_{\mathbb P_{\beta^0}}(M_n^{-2})$, which tends to zero in $\mathbb P_{\beta^0}$-probability. This proves coefficient contraction without a beta-min condition.
\end{proof}

\subsection{Proof of baseline transfer}\label{app:baseline-transfer}

\begin{proof}[Proof of Corollary~\ref{cor:baseline-contraction}]
The event $\mathcal E_n$ in the proof of Theorem~\ref{thm:new-contraction} does not depend on a candidate distribution. Integrating \eqref{eq:new-proof-loss} and applying
\eqref{eq:new-proof-modelsize} gives, for every finite-objective $Q$,
\[
 \frac{\E_Q\|X(\beta-\beta^0)\|^2}{4\sigma^2}
 +(a-11)(\log p)\E_Q|S|
 \leq \mathcal L_n(Q)-\ell_n(\beta^0)
       +10s_0\log p+1.
\]
Substituting $Q=Q_n$ and the assumed objective bounds gives the upper bound
$r_n+\delta_n+10s_0\log p+1$. Since $a>11$ and $\mathbb P_{\beta^0}(\mathcal E_n)\to1$, the two moment bounds follow. The same Markov and sparse-eigenvalue arguments used in the theorem give the tail conclusions.
\end{proof}

The one-sided inequalities permit any improvement over the baseline. The truth-centered product distribution in the theorem's proof satisfies the first inequality with $r_n=C_0s_0\log p$. A computable baseline satisfying the same bound also yields contraction, provided the comparison error is controlled at the stated order.

\subsection{Posterior transfer}\label{app:transfer-statements}

Suppose a nonempty support $S_0$ has posterior probability $\pi_*\in(0,1)$ and conditional active-coefficient posterior $N_{s_0}(m_*,A_*^{-1})$, where $A_*\succ0$. This conditional distribution is Gaussian exactly under our model. Its minimum reverse KL over product Gaussians is $d_* = (1/2)\log\{\prod_{j\in S_0}(A_*)_{jj}/\det A_*\}$.

\begin{proposition}[Posterior transfer]
\label{prop:new-transfer}
For any $Q\in\mathcal Q_K$ with within-family error at most $e\geq0$,
\begin{equation}
 Q(S\ne S_0)
 \leq\frac{-\log\pi_*+d_*+e+\log2}{\log\{1/(1-\pi_*)\}}.
 \label{eq:new-selection-transfer}
\end{equation}
More generally, for an event $B$ with $0<\Pi(B)<1$,
\begin{equation*}
 Q(B)\leq\frac{a_K+e+\log2}{\log\{1/\Pi(B)\}},\qquad
 \TV(Q,\Pi)\leq\sqrt{(a_K+e)/2}.
\end{equation*}
If the condition number of $A_*$ is at most $\kappa$, then $0\leq d_*\leq s_0\log\kappa/2$.
\end{proposition}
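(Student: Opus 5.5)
The plan rests on the exact decomposition $\KL(Q\|\Pi)=a_K+e$ from Section~\ref{sec:support-approximation}. With it, each displayed bound follows from the data-processing inequality for relative entropy applied to a binary partition, combined with a capacity bound on $a_K$.

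\textbf{General event bound.} Fix $B$ with $0<\Pi(B)<1$ and push $Q$ and $\Pi$ forward through the indicator of $B$. Writing $x=Q(B)$ and $\pi=\Pi(B)$, data processing gives
\[
x\log\frac{x}{\pi}+(1-x)\log\frac{1-x}{1-\pi}\leq \KL(Q\|\Pi)=a_K+e .
\]
We then drop $-(1-x)\log(1-\pi)\geq0$ and use the binary-entropy bound $x\log x+(1-x)\log(1-x)\geq-\log 2$. This leaves $x\log(1/\pi)\leq a_K+e+\log2$, which is the stated inequality for $Q(B)$.

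\textbf{Total-variation bound.} This is Pinsker's inequality, $\TV(Q,\Pi)\leq\sqrt{\KL(Q\|\Pi)/2}$, applied directly to the same decomposition.

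\textbf{Selection bound \eqref{eq:new-selection-transfer}.} Take $B=\{S\ne S_0\}$, so that $\Pi(B)=1-\pi_*\in(0,1)$. It then remains to show $a_K\leq-\log\pi_*+d_*$ for every $K\geq1$. This follows from Proposition~\ref{prop:support-capacity} with the single collection $\mathcal A=\{S_0\}$: then $r_{\mathcal A}=\pi_*$ and $\widetilde\pi_{S_0}=1$, so $a_K\leq-\log(\pi_*e^{-d_*})=-\log\pi_*+d_*$. The subclass used there has one component, so it lies in $\mathcal Q_1\subseteq\mathcal Q_K$ by Proposition~\ref{prop:new-nesting}. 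Substituting this bound into the general event bound gives \eqref{eq:new-selection-transfer}.

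\textbf{Condition-number bound on $d_*$.} The lower bound $d_*\geq0$ is Hadamard's inequality, already used in the proof of Proposition~\ref{prop:support-capacity}. For the upper bound, normalize as $M=D^{-1/2}A_*D^{-1/2}$ with $D=\operatorname{diag}(A_*)$, so that $d_*=-\tfrac12\log\det M$. Since $\operatorname{tr}M=s_0$ and $M$ is positive definite, the eigenvalues of $M$ average to one. The condition number is not invariant under diagonal scaling, so I will bound $\lambda_{\min}(M)$ from below directly. Each diagonal entry satisfies $(A_*)_{jj}\leq\lambda_{\max}(A_*)$, hence
\[
\lambda_{\min}(M)\geq\frac{\lambda_{\min}(A_*)}{\max_j(A_*)_{jj}}\geq\frac{1}{\kappa}.
\]
It follows that $\det M\geq\kappa^{-s_0}$, and therefore $d_*\leq s_0\log\kappa/2$.

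\textbf{Main obstacle.} I expect this last step to be the only delicate one. The point to verify is that diagonal normalization does not spoil the bound, and it does not: the lower bound on $\lambda_{\min}(M)$ uses only $\lambda_{\min}(A_*)$ and the diagonal bound $(A_*)_{jj}\leq\lambda_{\max}(A_*)$, not the condition number of $M$ itself. Everything else is direct.
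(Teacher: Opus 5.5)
Your proposal is correct and follows the paper's proof essentially step for step. The ingredients are the same: comparison with the single product Gaussian on $S_0$ (which you obtain from Proposition~\ref{prop:support-capacity} with $\mathcal A=\{S_0\}$, the same candidate the paper writes down directly), binary data processing with the $\log 2$ entropy bound, Pinsker's inequality, and Hadamard's inequality. The only deviations are cosmetic: since $e$ only bounds the within-family error, you should write $\KL(Q\|\Pi)\leq a_K+e$ rather than equality, and your normalized-matrix argument for $d_*\leq s_0\log\kappa/2$ is valid but longer than the paper's one line, which bounds $\prod_j(A_*)_{jj}\leq\lambda_{\max}(A_*)^{s_0}$ and $\det A_*\geq\lambda_{\min}(A_*)^{s_0}$ separately, so the diagonal-scaling concern never arises.
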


\begin{proof}[Proof of Proposition~\ref{prop:new-transfer}]
The product Gaussian with mean $m_*$ and covariance $\{\operatorname{diag}(A_*)\}^{-1}$, supported on $S_0$, belongs to $\mathcal Q_1$. Its KL to the full posterior is $-\log\pi_*+d_*$ by conditioning on the support and the Gaussian mean-field formula in Section~\ref{sec:mixture-capacity}. Comparison with this candidate and the within-family error bound give $\KL(Q\|\Pi)\leq-\log\pi_*+d_*+e$. For any event $B$, write $t=Q(B)$ and $u=\Pi(B)$. Binary data processing and the entropy bound $-t\log t-(1-t)\log(1-t)\leq\log2$ give
\[
 \KL(Q\|\Pi)\geq t\log(t/u)+(1-t)\log\{(1-t)/(1-u)\}\geq t\log(1/u)-\log2.
\]
Taking $B=\{S\ne S_0\}$ proves \eqref{eq:new-selection-transfer}. Using $\KL(Q\|\Pi)\leq a_K+e$ proves the event-probability bound, and Pinsker's inequality proves the total-variation bound. Hadamard's inequality gives $d_*\geq0$, while $\prod_j(A_*)_{jj}\leq\lambda_{\max}(A_*)^{s_0}$ and $\det A_*\geq\lambda_{\min}(A_*)^{s_0}$ give $d_*\leq s_0\log\kappa/2$.
\end{proof}

\begin{proof}[Proof of Theorem~\ref{thm:new-selection}]
Apply Proposition~\ref{prop:new-transfer} with $\pi_*=\pi_{*,n}$ and $A_*=A_{S_0}$. On an event with $\mathbb P_{\beta^0}$-probability tending to one, the denominator in \eqref{eq:new-selection-transfer} is at least $b\log p$. Moreover, $-\log\pi_{*,n}=o_{\mathbb P_{\beta^0}}(1)$, and bounded condition numbers give $d_*=O(s_0)$. Since $s_0+e_n=o_{\mathbb P_{\beta^0}}(\log p)$ and $p\to\infty$, the numerator is $o_{\mathbb P_{\beta^0}}(\log p)$, proving $Q_n(S\ne S_0)\xrightarrow{\mathbb P_{\beta^0}}0$. Whenever $Q_n(S=S_0)>1/2$, every true coordinate has PIP above $1/2$ and every inactive coordinate has PIP below $1/2$. Hence $\widehat S=S_0$ with $\mathbb P_{\beta^0}$-probability tending to one.
\end{proof}

\begin{proof}[Proof of Theorem~\ref{thm:new-bvm}]
Let $r_n=\tau^{-2}$ and $A_n=B_n+r_nI_{s_0}$. The exact conditional posterior on $S_0$ is $G_n=N_{s_0}(A_n^{-1}B_n\widehat\beta_{S_0},A_n^{-1})$. Write $G_n^0=N_{s_0}(\widehat\beta_{S_0},B_n^{-1})$. If $b_i$ are the eigenvalues of $B_n$, the covariance contribution to $2\KL(G_n\|G_n^0)$ is
\[
 \sum_{i=1}^{s_0}\left\{\frac1{1+r_n/b_i}-1+\log(1+r_n/b_i)\right\}\leq\frac12\sum_{i=1}^{s_0}(r_n/b_i)^2=O(s_0/n^2).
\]
The inequality follows by integrating $t/(1+t)^2\leq t$ from zero. The difference in means is $-r_nA_n^{-1}\widehat\beta_{S_0}$, whose squared $B_n$-norm is bounded by $C'\|\widehat\beta_{S_0}\|^2/n$. Under the sampling model, $\widehat\beta_{S_0}-\beta_{S_0}^0\sim N_{s_0}(0,B_n^{-1})$, so $\|\widehat\beta_{S_0}\|^2=O_{\mathbb P_{\beta^0}}(\|\beta_{S_0}^0\|^2+s_0/n)$. The assumptions imply $\KL(G_n\|G_n^0)=o_{\mathbb P_{\beta^0}}(1)$ and hence $\TV(G_n,G_n^0)=o_{\mathbb P_{\beta^0}}(1)$.

On the joint state space, the distance from $\Pi_n$ to the law supported on $S_0$ with conditional distribution $G_n$ is exactly $1-\Pi_n(S=S_0\mid Y)$. Thus the triangle inequality and Proposition~\ref{prop:new-transfer} give
\[
 \TV(Q_n,\mathcal G_n)
 \leq\sqrt{(a_{K_n,n}+e_n)/2}+1-\Pi_n(S=S_0\mid Y)+\TV(G_n,G_n^0)
 =o_{\mathbb P_{\beta^0}}(1).
\]
This also implies $Q_n(S=S_0)\xrightarrow{\mathbb P_{\beta^0}}1$. For each realization of $Y$, the measurable map $T_n(z)=\sqrt n(\beta_{S_0}-\widehat\beta_{S_0})$ sends $\mathcal G_n$ to $N_{s_0}(0,nB_n^{-1})$. Total variation cannot increase under this map, so $\TV(Q_n\circ T_n^{-1},N_{s_0}(0,nB_n^{-1}))\leq\TV(Q_n,\mathcal G_n)=o_{\mathbb P_{\beta^0}}(1)$, proving the final assertion without conditioning on $S$.
\end{proof}

\section{Experimental details}
\label{app:numerical}

\subsection{Design and metrics}
The designs in Table~\ref{tab:designs} use independent standard Gaussian factors. Within each group, $\widetilde X_j=\sqrt\rho U+\sqrt{1-\rho}\epsilon_j$, with a separate common factor $U$ per group; remaining columns are independent. One uniformly chosen member of each group is active, together with four or three independent columns in the one- or two-group design. In this order the signals are $(0.7,-0.7,0.7,-0.7,0.7)$. We permute columns, center and scale using training means and standard deviations with divisor $n$, and generate $y=X\beta^0+\epsilon$ with $\epsilon\sim N(0,I_n)$. The same transformations apply to the independent test rows.

\begin{table}[htbp]
\centering\small\setlength{\tabcolsep}{4pt}
\caption{Simulation designs: 50 independent datasets per $(p,\rho)$ cell.}
\label{tab:designs}
\begin{tabular*}{\linewidth}{@{\extracolsep{\fill}}lrrlr@{}}
\toprule
Groups (3 predictors) & $p$ & Signals outside groups & $\rho$ & Datasets\\
\midrule
One & $10,20,30,100$ & 4 & $0.7,0.9$ & 400\\
Two & 10 & 3 & $0.7,0.9,0.99$ & 150\\
\bottomrule
\end{tabular*}
\end{table}

All 550 datasets enter the primary comparisons. The local objective/refinement study reuses 60 two-group datasets for 240 fits at each refresh cap. The one-group fitting and time-budget checks reuse 40 and 20 datasets, respectively. Settings and indices were fixed before fitting the comparison arms. Truth, group labels and reference posteriors are used only for evaluation.

For $\pi_j=\Pi(\gamma_j=1\mid y)$ and $\widehat\pi_j=Q(\gamma_j=1)$, PIP error on $A$ is $|A|^{-1}\sum_{j\in A}|\widehat\pi_j-\pi_j|$. Unless stated otherwise, it uses all predictors. Grouped-support TV is $(1/2)\sum_{b\in\{0,1\}^{|B|}}|Q(\gamma_B=b)-\Pi(\gamma_B=b\mid y)|$ on the three or six correlated predictors $B$. Mixture pattern probabilities are analytic. Covariance error is $\|\operatorname{Cov}_Q(\beta)-\operatorname{Cov}_\Pi(\beta\mid y)\|_F$, with $\|\cdot\|_F$ the Frobenius norm. Prediction MSE is $\|X^\dagger(\E_Q\beta-\beta^0)\|^2/1000$, where $X^\dagger$ contains the 1,000 test rows. At $p=10$, all 1,024 supports are enumerated for exact normalizers and moments; larger dimensions use the MCMC references below.

For $R$ paired dataset differences $D_r$, intervals are $\overline D\pm t_{R-1,0.975}s_D/\sqrt R$, where $s_D$ is their sample standard deviation and $t_{R-1,0.975}$ is a Student-$t$ quantile. No multiplicity adjustment is applied. Independent integration uses four scrambles of 16,384 points per component. Integration variability and reference error are assessed separately and are not fully incorporated into these descriptive intervals. Tables show means, with bracketed intervals when available; smaller errors are better. No outcome is excluded by method performance.

\subsection{Posterior approximation}
\label{app:one-triplet-comparisons}
Tables~\ref{tab:formal-posterior} and~\ref{tab:formal-paired} use 50 datasets per $(p,\rho)$ cell, restricting PIP and support errors to eligible references; $R$ counts these references. Objective and prediction comparisons retain every dataset. Write $\Delta_{\mathrm{MF}}=\mathcal L(Q)-\mathcal L(Q_{\mathrm{MF}})$ for the mixture-minus-MFVI objective difference, which equals the reverse-KL difference. Dashes indicate unavailable absolute KL values at larger dimensions.

\begin{table}[htbp]
\centering\small\setlength{\tabcolsep}{4pt}
\caption{One-group mean errors (MFVI $\to$ mixture) and objective differences.}
\label{tab:formal-posterior}
\begin{tabular*}{\linewidth}{@{\extracolsep{\fill}}rrccc@{}}
\toprule
$p$ & $\rho$ & Reverse KL & $\Delta_{\mathrm{MF}}$ & PIP error\\
\midrule
10 & 0.7 & $0.3773\to0.0904$ & $ -0.2869 $ & $0.0230\to0.0068$\\
10 & 0.9 & $0.6916\to0.3378$ & $ -0.3538 $ & $0.0634\to0.0365$\\
20 & 0.7 & -- & $-0.1679$ & $0.01080\to0.00638$\\
20 & 0.9 & -- & $-0.2015$ & $0.01923\to0.01330$\\
30 & 0.7 & -- & $-0.1447$ & $0.00741\to0.00521$\\
30 & 0.9 & -- & $-0.2111$ & $0.01111\to0.00836$\\
100 & 0.7 & -- & $ -0.1352 $ & $0.00282\to0.00276$\\
100 & 0.9 & -- & $ -0.1160 $ & $0.00290\to0.00297$\\
\bottomrule
\end{tabular*}
\end{table}

\begin{table}[htbp]
\centering\footnotesize\setlength{\tabcolsep}{2pt}
\caption{One-group mixture-minus-MFVI differences.}
\label{tab:formal-paired}
\begin{tabular*}{\linewidth}{@{\extracolsep{\fill}}rrrcccc@{}}
\toprule
$p$ & $\rho$ & $R$ & Objective & PIP error & Support TV & Prediction MSE\\
\midrule
10 & 0.7 & 50 & \shortstack[c]{$-0.28687$\\$[-0.33610,-0.23764]$} & \shortstack[c]{$-0.01623$\\$[-0.01906,-0.01340]$} & \shortstack[c]{$-0.12059$\\$[-0.13957,-0.10161]$} & \shortstack[c]{$0.00182$\\$[-0.00255,0.00620]$}\\[2pt]
10 & 0.9 & 50 & \shortstack[c]{$-0.35385$\\$[-0.39479,-0.31291]$} & \shortstack[c]{$-0.02689$\\$[-0.03110,-0.02268]$} & \shortstack[c]{$-0.16151$\\$[-0.18223,-0.14079]$} & \shortstack[c]{$0.00001$\\$[-0.00285,0.00288]$}\\[2pt]
20 & 0.7 & 49 & \shortstack[c]{$-0.16792$\\$[-0.19062,-0.14522]$} & \shortstack[c]{$-0.00443$\\$[-0.00529,-0.00357]$} & \shortstack[c]{$-0.04453$\\$[-0.05858,-0.03047]$} & \shortstack[c]{$0.00010$\\$[-0.00249,0.00269]$}\\[2pt]
20 & 0.9 & 37 & \shortstack[c]{$-0.20154$\\$[-0.24994,-0.15314]$} & \shortstack[c]{$-0.00593$\\$[-0.00800,-0.00387]$} & \shortstack[c]{$-0.06982$\\$[-0.09718,-0.04247]$} & \shortstack[c]{$-0.00601$\\$[-0.02018,0.00817]$}\\[2pt]
30 & 0.7 & 49 & \shortstack[c]{$-0.14470$\\$[-0.16735,-0.12204]$} & \shortstack[c]{$-0.00220$\\$[-0.00275,-0.00165]$} & \shortstack[c]{$-0.02945$\\$[-0.04168,-0.01722]$} & \shortstack[c]{$0.00344$\\$[0.00159,0.00529]$}\\[2pt]
30 & 0.9 & 29 & \shortstack[c]{$-0.21109$\\$[-0.25470,-0.16748]$} & \shortstack[c]{$-0.00275$\\$[-0.00347,-0.00202]$} & \shortstack[c]{$-0.02888$\\$[-0.04197,-0.01579]$} & \shortstack[c]{$-0.00191$\\$[-0.00729,0.00347]$}\\[2pt]
100 & 0.7 & 45 & \shortstack[c]{$-0.13524$\\$[-0.16578,-0.10469]$} & \shortstack[c]{$-0.00006$\\$[-0.00019,0.00008]$} & \shortstack[c]{$-0.00148$\\$[-0.00418,0.00123]$} & \shortstack[c]{$0.00492$\\$[-0.00293,0.01277]$}\\[2pt]
100 & 0.9 & 24 & \shortstack[c]{$-0.11602$\\$[-0.14214,-0.08990]$} & \shortstack[c]{$0.00007$\\$[-0.00010,0.00023]$} & \shortstack[c]{$-0.00061$\\$[-0.00171,0.00049]$} & \shortstack[c]{$0.00138$\\$[-0.00086,0.00361]$}\\[2pt]
\bottomrule
\end{tabular*}
\end{table}

Across $p=10,20,30$, PIP error improves on 263 of 264 reference-eligible datasets and support TV on 262. At $p=100$, neither paired interval excludes zero at either correlation. Prediction MSE increases by $0.00344$ at $p=30$, $\rho=0.7$ (95\% interval $[0.00159,0.00529]$); the other one-group prediction intervals include zero.

At $p=10$, MCMC estimates have smaller PIP, support and covariance errors than the mixtures. For example, at $\rho=0.9$, their mean errors are $0.00221$, $0.0090$ and $0.0033$, versus $0.03651$, $0.1834$ and $0.0790$ for mixtures. These compare posterior summaries; an empirical MCMC law has infinite reverse KL to the continuous-slab posterior. Improved approximation also does not uniformly improve thresholded selection: at $\rho=0.7,0.9$, MFVI and mixtures share true-positive rates $0.992,0.972$, while false-discovery proportions rise from $0.0330,0.0393$ to $0.0497,0.0493$. Active-coefficient 95\% coverage is $0.956$ for both at $\rho=0.7$ and rises from $0.920$ to $0.928$ at $\rho=0.9$.

\subsection{Controlled objective and refinement comparisons}
\label{app:controlled-local}
We reuse replicates $0$--$19$ at each correlation in the two-group design, with exact references, giving 60 datasets and 480 fits across the two caps. The subset, initialization, numerical controls and primary KL endpoint were fixed before the cap-16 study; the cap-128 follow-up was specified before its fits. Both studies condition on a candidate prepared using the direct objective and do not reproduce another method's complete adaptive pipeline.

\paragraph{Common candidate and fitting arms.}
Take the first four components in the saved stagewise fit's append order and renormalize their weights. If fewer are available, duplicate a largest-weight component and split its weight equally, breaking ties by stored order. Append the residual-guided product from Appendix~\ref{app:reference-adaptive} with weight $0.1$, multiplying incumbent weights by $0.9$. All arms load identical arrays. Thirteen datasets require padding; nine have four identical incumbent shapes, making the frozen and stagewise feasible families coincide. These cases remain in every summary.

Direct joint frees all component parameters and weights. Augmented joint frees the same coordinates but evaluates the full Bernoulli/Gaussian mixture on $(\gamma,\beta^+)$ from Section~\ref{sec:direct-augmentation}. Direct frozen fixes the first four shapes but frees all weights and the fifth shape; direct stagewise also fixes the first four relative weights. No arm changes the nominal $K=5$.

\paragraph{Numerical controls and assessment.}
Each run allows 60 seconds of local refinement, at most 16 or 128 refreshes, and 25 L-BFGS-B iterations per refresh, stopping after four consecutive unsuccessful refreshes. Training starts with $2^{10}$ points per component and increases after failures as in Appendix~\ref{app:reference-adaptive}. Local validation uses three scrambles of $2^{12}$ points, tolerance $10^{-4}$ and three-standard-error acceptance. All arms bound full augmented-component and weight-vector KL by $0.25$, require relative effective sample size (ESS) at least $0.5$ and mass within $0.05$ of one for both direct and augmented importance ratios, and share step scales and parameter bounds.

Acceptance and final validation use each arm's own objective. Four separate scrambles of $2^{14}$ points per component compare the fit with its initial candidate, which is returned if improvement is unresolved. Four fresh assessment scrambles evaluate direct posterior KL, shared across arms and caps for paired precision but disjoint from fitting and validation. Exact references and truth never select fitting steps. The cap-128 runs restart from the same candidates and seeds; their first 16 refreshes match the earlier histories within $10^{-10}$ after excluding timing fields.

\paragraph{Primary outcomes and budget sensitivity.}
Table~\ref{tab:controlled-local-means} gives absolute KL means; the primary paired contrasts and intervals are in main-text Table~\ref{tab:controlled-kl-contrasts}. At both caps, direct joint beats augmentation on all 60 datasets and each direct restriction on 51, with nine numerical ties at tolerance $10^{-10}$. Mean-contrast integration SEs are below $5.6\times10^{-5}$. Increasing the cap lowers direct joint's mean KL by $8.0\%$, $5.9\%$ and $3.0\%$ across correlations (Table~\ref{tab:controlled-budget-within}). All nine mean advantages increase; eight paired change intervals exclude zero, with augmentation at $\rho=0.7$ the exception. Augmentation improves its own objective on all datasets, but its mean direct KL increases; those three increase intervals include zero.

\begin{table}[htbp]
\centering\small\setlength{\tabcolsep}{3pt}
\caption{Mean direct posterior KL at refresh caps $16\to128$.}
\label{tab:controlled-local-means}
\begin{tabular*}{\linewidth}{@{\extracolsep{\fill}}rcccc@{}}
\toprule
$\rho$ & \textbf{Direct joint (ours)} & Augmented joint & Direct frozen & Direct stagewise\\
\midrule
0.7 & $0.324\to0.298$ & $0.917\to0.982$ & $0.368\to0.361$ & $0.372\to0.366$\\
0.9 & $0.754\to0.709$ & $1.313\to1.397$ & $0.809\to0.806$ & $0.815\to0.813$\\
0.99 & $1.866\to1.810$ & $2.393\to2.463$ & $1.903\to1.902$ & $1.908\to1.907$\\
\bottomrule
\end{tabular*}
\end{table}

\begin{table}[htbp]
\centering\footnotesize\setlength{\tabcolsep}{3pt}
\caption{Within-arm changes in direct posterior KL: cap 128 minus cap 16.}
\label{tab:controlled-budget-within}
\begin{tabular*}{\linewidth}{@{\extracolsep{\fill}}lccc@{}}
\toprule
Arm & $\rho=0.7$ & $\rho=0.9$ & $\rho=0.99$\\
\midrule
\textbf{Direct joint (ours)} & \shortstack[c]{$-0.02592$\\$[-0.03944,-0.01241]$} & \shortstack[c]{$-0.04472$\\$[-0.06655,-0.02290]$} & \shortstack[c]{$-0.05631$\\$[-0.10142,-0.01121]$}\\[4pt]
Augmented joint & \shortstack[c]{$0.06533$\\$[-0.03350,0.16415]$} & \shortstack[c]{$0.08427$\\$[-0.02185,0.19038]$} & \shortstack[c]{$0.07041$\\$[-0.00350,0.14433]$}\\[4pt]
Direct frozen & \shortstack[c]{$-0.00667$\\$[-0.01820,0.00486]$} & \shortstack[c]{$-0.00247$\\$[-0.00464,-0.00031]$} & \shortstack[c]{$-0.00098$\\$[-0.00175,-0.00022]$}\\[4pt]
Direct stagewise & \shortstack[c]{$-0.00599$\\$[-0.01639,0.00441]$} & \shortstack[c]{$-0.00282$\\$[-0.00449,-0.00115]$} & \shortstack[c]{$-0.00132$\\$[-0.00274,0.00010]$}\\[4pt]
\bottomrule
\end{tabular*}
\end{table}

\paragraph{Secondary outcomes and stopping.}
The direct-versus-augmented KL gap is mainly coefficient-conditional at cap 16 and mainly support KL at cap 128, using the decomposition below. At $\rho=0.99$, the cap-128 total contrast $-0.6530$ equals support contrast $-1.0118$ plus conditional contrast $0.3588$. At $\rho=0.9,0.99$, augmentation has smaller conditional contributions despite larger total KL. Contributions weight each fit's own support law, so this does not compare conditional errors under identical weights.

At $\rho=0.99$, the direct-joint-minus-augmented support-TV contrast changes from $0.0202$ with interval $[0.0071,0.0334]$ at cap 16 to $-0.2246$ with interval $[-0.2487,-0.2005]$ at cap 128. At the larger cap, PIP, covariance, posterior-mean and full-support-TV errors also favor direct joint over augmentation, while prediction intervals include zero. Eleven of the 90 endpoint/comparator/correlation mean contrasts change sign between caps. The accompanying records retain all endpoints and intervals at both caps.

All 480 fits complete without failure or retry, and none reaches the local time limit. Cap termination falls from 179 of 240 fits at cap 16 to 26 at cap 128; 25 of the latter are augmented fits, and 15 accept their last refresh. At each cap, direct joint, frozen and stagewise return their initial candidates in 9, 9 and 12 cases, respectively; augmentation has no such returns. Final validation and assessment are outside the local clock. These stopping outcomes do not establish global convergence.

\subsection{Complete fitting procedures}
\label{app:fitting-comparison}
\label{app:two-groups}
The three procedures share saved MFVI baselines, a ten-component cap, 60-second search and 20-second proposal allowances, and independent evaluation. Frozen refinement uses the joint solver's proposal order but fixes existing shapes after initialization, freeing all weights and the new component. A split can displace its source before freezing. Stagewise fitting holds the incumbent fixed, including relative weights, and optimizes the new product and its weight \citep{miller2017boosting}. Its first two proposals use independently seeded importance/weighted-EM initializers; if both fail and time remains, it uses the released-tilt proposal (Appendix~\ref{app:stagewise-initialization}).

\begin{table}[htbp]
\centering\small\setlength{\tabcolsep}{3.5pt}
\caption{Mean errors for the two-group design.}
\label{tab:two-group-errors}
\begin{tabular*}{\linewidth}{@{\extracolsep{\fill}}rlrrrrr@{}}
\toprule
$\rho$ & Method & Reverse KL & Support TV & PIP error & Covariance & MSE\\
\midrule
0.7 & MFVI & 0.5982 & 0.2207 & 0.0416 & 0.0495 & 0.0928 \\
0.7 & \textbf{Joint refinement (ours)} & 0.1942 & 0.0837 & 0.0170 & 0.0188 & 0.0944 \\
0.7 & Frozen refinement & 0.3124 & 0.1396 & 0.0273 & 0.0282 & 0.0939 \\
0.7 & Stagewise & 0.3751 & 0.1472 & 0.0296 & 0.0271 & 0.0934 \\
\midrule
0.9 & MFVI & 1.1828 & 0.4505 & 0.1104 & 0.1757 & 0.1066 \\
0.9 & \textbf{Joint refinement (ours)} & 0.6762 & 0.3150 & 0.0730 & 0.1282 & 0.1045 \\
0.9 & Frozen refinement & 0.9419 & 0.3735 & 0.0838 & 0.1390 & 0.1035 \\
0.9 & Stagewise & 0.7752 & 0.3056 & 0.0677 & 0.1051 & 0.1008 \\
\midrule
0.99 & MFVI & 2.4889 & 0.7396 & 0.2208 & 0.6223 & 0.0857 \\
0.99 & \textbf{Joint refinement (ours)} & 1.9687 & 0.6460 & 0.1768 & 0.5509 & 0.0848 \\
0.99 & Frozen refinement & 2.0893 & 0.6542 & 0.1558 & 0.5549 & 0.0840 \\
0.99 & Stagewise & 1.7204 & 0.5788 & 0.1324 & 0.4995 & 0.0849 \\
\bottomrule
\end{tabular*}
\end{table}

Table~\ref{tab:two-group-errors} uses 50 datasets per correlation; PIP error covers all ten predictors and support TV covers the six grouped predictors. All 450 mixture fits complete without failure or final fallback. Joint refinement improves KL, support TV, PIP and covariance over MFVI on every dataset. Against stagewise, it has lower KL in 49, 38 and 11 of 50 cases as $\rho$ increases. At $\rho=0.9$, the joint-minus-stagewise covariance contrast is $0.0231$ with interval $[0.0113,0.0349]$. At $\rho=0.99$, the PIP and covariance contrasts are $0.0444$ $[0.0333,0.0555]$ and $0.0514$ $[0.0367,0.0661]$, both favoring stagewise. Frozen refinement also has smaller PIP error than joint refinement there, despite higher KL. All prediction intervals between mixture strategies include zero.

Joint refinement reaches the component cap in 149 of 150 fits. Mean component counts for frozen refinement are $9.60,9.08,7.44$ and for stagewise $5.32,6.36,7.40$; common search limits therefore do not imply equal sizes. In the supplementary one-group ablation (20 datasets per correlation), joint refinement improves KL over frozen fitting in all 40 cases, but its covariance error exceeds stagewise by $0.02249$ $[0.00168,0.04331]$ at $\rho=0.9$; prediction intervals include zero.

\paragraph{Support and coefficient error decomposition.}
\label{app:two-group-decomposition}
The KL chain rule gives
\begin{equation}
\KL(Q\|\Pi)=\KL(Q_\gamma\|\Pi_\gamma)+\sum_s Q_\gamma(s)\KL\{Q(\beta\mid\gamma=s)\|\Pi(\beta\mid\gamma=s)\}.
\label{eq:empirical-kl-decomposition}
\end{equation}
Support KL is computed exactly from saved probabilities; the conditional contribution is total KL minus support KL and inherits total-KL integration uncertainty. Figure~\ref{fig:two-group-decomposition} averages over 50 datasets per correlation, with shared horizontal scales. At $\rho=0.9$, the joint-minus-stagewise conditional contrast is $-0.14861$ (paired interval $[-0.16787,-0.12934]$), while the support contrast $0.04962$ has interval $[-0.00626,0.10550]$. At $\rho=0.99$, stagewise improves both contributions, with support KL accounting for approximately $85\%$ of its mean advantage. This locates residual error without attributing it to proposals, optimization or capacity.

\begin{figure}[htbp]
\centering
\includegraphics[width=1.0\linewidth]{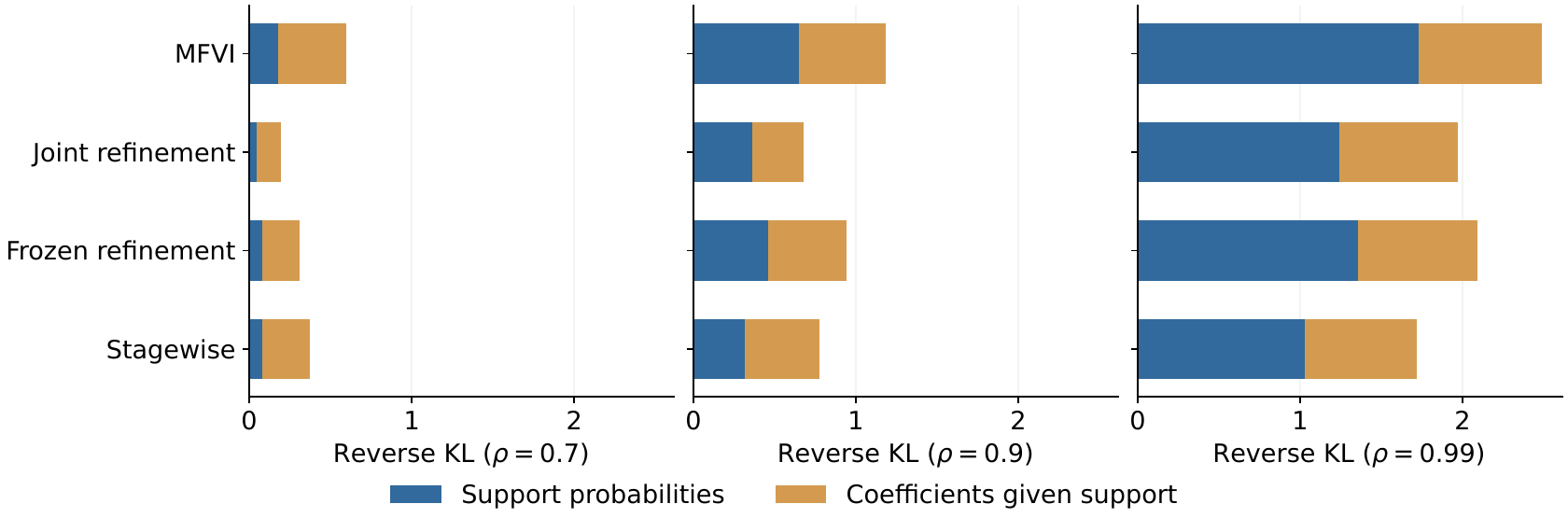}
\caption{Support and coefficient-conditional contributions to mean reverse KL.}
\label{fig:two-group-decomposition}
\end{figure}

\subsection{Reference reliability}
\label{app:reference-audit}
MCMC uses collapsed random-scan Gibbs support updates and reversible swaps, followed by conditional Gaussian coefficient draws. Four chains per dataset use 1,000 warmup sweeps and 2,000 retained draws under a 240-second cap; all 400 one-group references reach that count. Nonconstant coefficients, indicators, model size and log support mass require rank-normalized split/folded $\widehat R\leq1.01$ and bulk and tail ESS at least 400 \citep{vehtari2021rank}.

The eight grouped inclusion patterns additionally require $\widehat R\leq1.01$, bulk and raw binary ESS at least 400, and probability Monte Carlo standard error (MCSE) at most $0.01$. Constant events are flagged. This probability-specific screen was chosen after inspecting saved chains because binary tail quantiles can coincide, making quantile-based tail ESS unavailable. Combined eligibility counts at $\rho=0.7,0.9$ are $50,50$ for $p=10$, $49,37$ for $p=20$, $49,29$ for $p=30$, and $45,24$ for $p=100$.

The alternative quantile-based screen admits all 300 references at $p=10,20,30$ and $15,28$ at $p=100$. It retains PIP improvements at $p=20,30$, while neither screen resolves a mean PIP improvement at $p=100$. Four longer chains with 2,000 warmup sweeps and 8,000 retained draws on six large-case datasets all pass the probability screen; mean absolute PIP changes range from $0.00091$ to $0.00125$. Full event diagnostics and sensitivity results accompany the numerical records.

\subsection{Implementation and supplementary records}
\label{app:optimization-sensitivity}
All 100 primary $p=100$ mixture fits reach the time limit. A check on ten datasets per correlation raises the search allowance from 120 to 300 and 600 seconds under otherwise matched settings. The first increase lowers the mean objective by $0.01661$ $[0.01172,0.02151]$ and $0.01918$ $[0.00059,0.03777]$; the final increase gives only $0.00052$ and $0.00153$ more. PIP intervals mostly include zero, and all prediction intervals include zero. Complete contrasts and stopping records are retained separately.

One-group $p=10$ approximation fits use an Apple M3 MacBook Air with 8 GB; other fits use an AMD Ryzen 9 7845HX computer with 16 GB. Workers use one numerical-library thread, Python 3.12.14, NumPy 2.3.5 and SciPy 1.18.1. Wall-clock limits depend on hardware load; no speed advantage is established. Appendix~\ref{app:reference-adaptive} specifies the solver.

The code and results repository provides implementations, dataset seeds and per-dataset outcomes for reproducing the reported summaries, including reference exclusions and stopping outcomes. Its budget-sensitivity records contain both caps, all within-arm changes and all changes in contrasts. Extended experimental accounts, raw reference draws and full optimization traces are archived separately.

\FloatBarrier
\section{Algorithm details}
\label{app:reference-adaptive}

This appendix specifies Algorithm~\ref{alg:adaptive-mixture}. Adaptive fits in both designs use these settings, with fixed $\sigma^2$, $\tau^2$ and $\omega$. The controlled local variants are specified in Appendix~\ref{app:controlled-local}.

\paragraph{Mean-field initialization.}
Set $h_j=\|X_j\|^2/\sigma^2$ and $v_j=(h_j+\tau^{-2})^{-1}$. Run cyclic coordinate-ascent variational inference (CAVI) twice, with initial inclusion probabilities $\alpha_j=\omega$ and initial active means respectively zero and $v_jX_j^\top y/\sigma^2$. At each coordinate, write $m_j=\alpha_j\mu_j$ and maintain $\widetilde r=y-Xm$. In increasing predictor-index order, the updates are as follows, with superscript $\mathrm{new}$ denoting the value after the current coordinate update:
\begin{align}
 \mu_j^{\mathrm{new}}&=v_j\{X_j^\top \widetilde r/\sigma^2+h_jm_j\},\notag\\
 \alpha_j^{\mathrm{new}}&=\operatorname{logit}^{-1}\!\left\{\operatorname{logit}(\omega)+\tfrac12\log(v_j/\tau^2)+\frac{(\mu_j^{\mathrm{new}})^2}{2v_j}\right\},\notag\\
 \widetilde r^{\mathrm{new}}&=\widetilde r-X_j(\alpha_j^{\mathrm{new}}\mu_j^{\mathrm{new}}-m_j).
 \label{eq:reference-mf-updates}
\end{align}
Clip updated inclusion probabilities to $[10^{-10},1-10^{-10}]$ and keep the variances at $v_j$. Stop when the maximum change in $(\alpha,m)$ over a sweep is below $10^{-8}$ or after 1,500 sweeps. In addition to these two fits, run eight more after independently permuting the coordinate order, alternating the two initializations. Map each result back to the original predictor order, recompute its analytic objective, and retain the lowest-objective product among all ten fits as $Q_{\mathrm{MF}}$. Set the outer incumbent to this product.

\paragraph{Split proposals.}
At the outer stage with incumbent $Q^{(k)}$, select a largest-weight component $k_*$, breaking ties by its stored order. Let $J$ contain its $\min(20,p)$ largest inclusion probabilities, with predictor-index ties. Form $A_J=X_J^\top X_J/\sigma^2+\tau^{-2}I_{|J|}$ and $E_J=A_J^{-1}-\operatorname{diag}(v_{k_*j}:j\in J)$. Let $e_{\max}$ be the largest eigenvalue of $E_J$ and $b$ a corresponding unit eigenvector. With $\overline v_J=|J|^{-1}\sum_{j\in J}v_{k_*j}$, define a displacement supported on $J$ by $d_J=a\sqrt{\max\{e_{\max},0.05\overline v_J\}}\,b$, using scales $a=0.5$ and $a=1$ in successive attempts. Replace component $k_*$ by two components with means $\mu_{k_*}-d$ and $\mu_{k_*}+d$, copying its inclusion probabilities and variances and assigning each half its weight. All other components retain their parameters. The eigenvector sign merely interchanges the split components. This initializes a candidate $\widetilde Q^{(k+1)}$ with $k+1$ components, with every coordinate available during joint refinement.

\paragraph{Alternative proposal.}
If neither split is accepted and time remains, use the largest-weight component to form $\widetilde r=y-X(\alpha_{k_*}\odot\mu_{k_*})$. Choose the coordinate maximizing $(1-\alpha_{k_*j})|X_j^\top\widetilde r|/\sqrt{h_j}$, breaking ties by index. A numerical floor of $10^{-12}$ is used for $h_j$. Let $t_j=2\operatorname{sign}(X_j^\top\widetilde r)\sqrt{h_j+\tau^{-2}}$, using a positive sign for a zero inner product, and set other entries of $t$ to zero. Starting from the univariate-ridge initialization, run up to 100 CAVI sweeps with $t_j$ added inside the braces of the mean update in \eqref{eq:reference-mf-updates}. Remove the tilt and continue ordinary CAVI from that solution for up to 300 sweeps, using the same $10^{-8}$ stopping tolerance. Append the resulting product with weight $0.1$ and multiply incumbent weights by $0.9$. The temporary tilt only constructs an initializer. The subsequent mixture objective and posterior target are unchanged.

\paragraph{Local refinement.}
At each refinement refresh, use the current trial mixture as reference and draw $N$ scrambled Sobol points per component. The objective is $\widehat{\mathcal L}(\theta)=\sum_kw_k\mathcal L_k-\widehat{\mathcal J}(\theta)$, where \eqref{eq:mixture-importance} uses the full reference mixture denominator. Fix the reference during each local optimization, use unnormalized importance ratios, and differentiate as in \eqref{eq:finite-information-gradient}. Density evaluations use logarithms and log-sum-exp.

Write optimizer coordinates as $\vartheta=(u,\eta,\mu,\log v)$ with softmax weights, logit inclusion probabilities, and log variances. At reference $\vartheta^{(t)}$, optimize $\vartheta=\vartheta^{(t)}+D x$, where $D$ has unit entries except for mean entries $\sqrt{v_{kj}^{(t)}}$. Box half-widths for $x$ are respectively $1,0.5,0.5,0.4$. Intersect inclusion-logit bounds with $[-23,23]$. Start at $x=0$ and use L-BFGS-B with at most 25 iterations, relative objective tolerance $10^{-10}$, projected-gradient tolerance $10^{-5}$, and at most 15 line-search steps. A finite proposal can be checked even if the local iteration cap binds.

These fixed bounds are numerical settings for the reported experiments. If every inclusion probability lies in $[\epsilon,1-\epsilon]$ for fixed $0<\epsilon<1/2$, each component and hence the mixture satisfy $Q(S=S_0)\leq(1-\epsilon)^p$. Thus fixed clipping precludes full support-mass concentration, although this bound does not rule out consistent thresholded-PIP decisions. An asymptotic implementation intended to concentrate on $S_0$ must relax clipping at least so that $p_n\epsilon_n\to0$, together with the other approximation and optimization conditions. The theoretical families allow boundary inclusion probabilities.

\paragraph{Reference updates.}
Let $f$ count consecutive unsuccessful refreshes. Training uses $N=1024\,2^{\min(2,\lfloor f/2\rfloor)}$ points per component. Refinement stops after four unsuccessful refreshes, so the attained training sizes are 1,024 and 2,048. Each proposal allows at most 16 refreshes and a nominal 20 seconds, subject to the remaining overall budget. For a proposed displacement, start with its full length and reduce it by halves until the largest component KL to its reference and the weight-vector KL are at most $0.25$, each relative importance effective sample size is at least $0.5$, and each importance mass differs from one by at most $0.05$.

The overlap checks use source-specific ratios $a_{hb}=q_h(z_{hb})/r_h(z_{hb})$ even though objective integration uses the full mixture denominator. Their relative effective sample sizes and estimated masses are $(\sum_ba_{hb})^2/(N\sum_ba_{hb}^2)$ and $N^{-1}\sum_ba_{hb}$. There are at most twelve overlap checks with successive step reductions before a validation attempt and at most three validation attempts per refresh. A failed attempt halves the current displacement before retrying. A successful refinement resets $f$ to zero and becomes the next reference. Otherwise increment $f$ and retain the current trial.

Algorithm~\ref{alg:joint-refinement} gives the local iteration used by \textsc{JointRefine}. Within this algorithm, $K$ denotes the input mixture's component count and is fixed throughout the call. The maps \textsc{Pack} and \textsc{Unpack} convert between distribution parameters $\theta$ and the optimizer coordinates $\vartheta$ defined above. The reference is $R^{(t)}=Q_{\theta^{(t)}}$. A proposed parameter vector $\widetilde\theta^{(t)}$ becomes $\theta^{(t+1)}$ only if the overlap checks and local validation succeed. Otherwise $\theta^{(t+1)}=\theta^{(t)}$. Here $j$ indexes validation attempts and $c$ indexes overlap checks. Local validation uses scrambles separate from the fitting points and reuses them across retries within the same refresh.

\begin{algorithm}[!t]
\caption{\textsc{JointRefine}: local refinement of a fixed-size mixture}
\label{alg:joint-refinement}
\begin{algorithmic}[1]
\REQUIRE Initial mixture $Q_{\theta^{(0)}}$, model inputs, proposal deadline, seed.
\ENSURE Refined mixture with the same $K$ components and a refinement record.
\STATE $t\gets0$; $f\gets0$.
\WHILE{$t<16$, $f<4$ and time remains}
 \STATE $R^{(t)}\gets Q_{\theta^{(t)}}$; $N\gets1024\,2^{\min(2,\lfloor f/2\rfloor)}$.
 \STATE Draw $N$ points per reference component and fix the estimator in \eqref{eq:mixture-importance}.
 \STATE $\vartheta^{(t)}\gets\text{\textsc{Pack}}(\theta^{(t)})$.
 \STATE Run bounded L-BFGS-B on $\widehat{\mathcal L}$ to obtain displacement $\delta^{(t)}=Dx$.
 \STATE \textbf{if} optimization is interrupted by the deadline \textbf{then break}
 \STATE $\theta^{(t+1)}\gets\theta^{(t)}$; $a\gets1$; $\mathrm{accepted}\gets\mathrm{false}$.
 \FOR{$j=1,2,3$}
  \STATE \textbf{if} the deadline is reached \textbf{then break}
  \FOR{$c=1,\ldots,12$}
   \STATE $\widetilde\theta^{(t)}\gets\text{\textsc{Unpack}}(\vartheta^{(t)}+a\delta^{(t)})$.
   \STATE \textbf{if} all overlap checks pass \textbf{then break}
   \STATE $a\gets a/2$.
  \ENDFOR
  \IF{all overlap checks pass and time remains}
   \IF{$\text{\textsc{Validate}}(Q_{\widetilde\theta^{(t)}},Q_{\theta^{(t)}};\mathrm{local})$}
    \STATE $\theta^{(t+1)}\gets\widetilde\theta^{(t)}$; $\mathrm{accepted}\gets\mathrm{true}$.
    \STATE \textbf{break}
   \ENDIF
  \ENDIF
  \STATE $a\gets a/2$.
 \ENDFOR
 \STATE \textbf{if} accepted \textbf{then} $f\gets0$ \textbf{else} $f\gets f+1$.
 \STATE Record the refresh outcome; $t\gets t+1$.
\ENDWHILE
\RETURN $Q_{\theta^{(t)}}$.
\end{algorithmic}
\end{algorithm}

\paragraph{Acceptance and fallback.}
For fresh evaluation under a candidate mixture, sum over the conditional component label analytically. With responsibilities $\rho_k(z)=w_kq_k(z)/q_\theta(z)$ and direct points $z_{hb}$ from component $Q_h$, evaluate information as
\[
 \check{\mathcal J}(\theta)=\sum_{h=1}^{K}\frac{w_h}{N}\sum_{b=1}^{N}\sum_{k=1}^{K}\rho_k(z_{hb})\log\frac{q_k(z_{hb})}{q_\theta(z_{hb})}.
\]
Local comparisons use three independently scrambled estimates with $N=4096$. Transform common Sobol points separately under the two compared distributions. For $M$ differences $\Delta_1,\ldots,\Delta_M$, compute $\overline\Delta=M^{-1}\sum_{r=1}^M\Delta_r$ and $s_\Delta=\{\sum_{r=1}^M(\Delta_r-\overline\Delta)^2/[M(M-1)]\}^{1/2}$, and apply \eqref{eq:empirical-acceptance}. For each compared law, its mean information estimate must also lie in $[0,H(w)]$ up to the larger of $10^{-8}$ and three standard errors. The three retries within a refresh reuse its validation scrambles. These repeated comparisons form an adaptive numerical acceptance rule.

After refinement, compare the trial with the pre-expansion incumbent using three fresh scrambles with $N=8192$. Accept the first candidate passing the criterion, increment $k$ and continue; if all three proposals fail, retain the incumbent and stop. After search, use four fresh scrambles with $N=16{,}384$ to compare the incumbent with the original MFVI product, returning MFVI if improvement is unresolved. Once the return value is fixed, report an independent assessment using four fresh scrambles at the same resolution. No components are merged or pruned.

\paragraph{Stagewise initialization.}
\label{app:stagewise-initialization}
The stagewise adaptation's first two proposals use independently seeded importance/weighted-EM initializers adapted from the supplementary initialization procedure of \citet{miller2017boosting}. Draw $N=4096$ joint scrambled Sobol points from the incumbent and normalize weights proportional to $e^{-\ell(z)}p(z)/q_{\mathrm{old}}(z)$, where $p=\mathrm{d}P/\mathrm{d}\nu$. Select at most eight largest-weight observations exceeding $10/N$. Form a proposal mixing the incumbent with components centered at their sampled active coefficients, with inclusion probabilities $0.05+0.9\gamma_j$. Their slab variances are the incumbent inclusion-weighted within-component variances, floored at $(h_j+\tau^{-2})^{-1}$. Assign the selected importance weights as proposal masses, rescaled if necessary to retain at least $10^{-4}$ mass on the incumbent.

Draw another 4096 points from this proposal and reweight to the same target. Initialize the new component at the largest-weight point and its mixing fraction at $0.1$. Perform 20 importance-weighted EM iterations with the incumbent distribution clamped, updating the new fraction and its Bernoulli--Gaussian sufficient statistics. Fractions are clipped to $[10^{-4},1-10^{-4}]$, inclusion probabilities to $[10^{-6},1-10^{-6}]$, and variances to $[0.1(h_j+\tau^{-2})^{-1},4\tau^2]$. These initialization heuristics are adaptation choices. If both proposals fail acceptance and time remains, append the ordinary released-tilt proposal. Subsequent restricted refinement, trust checks, expansion validation and final fallback use the same rules as the joint-refinement solver. Code tests check the reduced-coordinate derivatives and preservation of the incumbent shapes and relative weights.

\paragraph{Execution settings.}
The master numerical seed, proposal order, and deterministic offsets for stages, refreshes, validation replicates, and component indices are recorded in the source snapshot and run manifest. Training, expansion validation, final fallback, and independent final evaluation use separate seed offsets. A scrambled Sobol point has $2p$ coordinates. Threshold the first $p$ for inclusions and transform the remaining coordinates through the standard normal quantile, clipping uniforms to $[10^{-14},1-10^{-14}]$. Conditional label expectations are summed analytically. Final evaluation uses batches of at most 1,024 points to limit memory.

The nominal search clock, 60 seconds at $p=10$ and 120 seconds at $p=20,30,100$, starts after the externally computed multistart mean-field baseline and includes proposals, refinements, and expansion validation. The optimizer and search check deadlines before further work, but an ongoing evaluation or validation can finish and accept a step after its nominal deadline. Final fallback validation is performed after the search and is included in recorded fit time. The independent final evaluation, exact posterior calculations where applicable, and posterior-summary evaluation are timed separately. The worker timeout is 600 seconds for fitting and MCMC; archived pipelines with additional comparisons use 700 seconds. The experiment records retain the returned and candidate component counts, fallback decision, validation traces, stopping causes, and the completion status of every planned dataset.

\end{document}